\tikzset{LMC style/.style={>=angle 60,every edge/.append style={thick},every state/.style={thick,minimum size=20,inner sep=0.5}}}
\newcommand{\RR}{\mathbb{R}}
\newcommand{\NN}{\mathbb{N}}
\newcommand{\QQ}{\mathbb{Q}}
\newcommand{\PP}{\mathbb{P}}
\newcommand{\GG}{\mathscr{G}}
\newcommand{\Exp}{\mathit{Exp}}
\newcommand{\1}{\mathbbm{1}}
\newcommand{\supp}{\mathrm{supp}}
\newcommand{\pl}{\Gamma_{\mathit{GEM}}}
\newcommand{\Leb}{\lambda_{\mathit{Leb}}}
\DeclareMathOperator{\Span}{span\,}
\title{Equivalence of Hidden Markov Models with Continuous Observations}
\author{Oscar Darwin}{Department of Computer Science, Oxford University, United Kingdom }{}{https://orcid.org/0000-0001-5016-014X}{}
\author{Stefan Kiefer}{Department of Computer Science, Oxford University, United Kingdom}{}{https://orcid.org/0000-0003-4173-6877}{}
\authorrunning{O. Darwin and S. Kiefer}
\keywords{Markov chains, equivalence, probabilistic systems, verification}
\begin{document}

\maketitle

\begin{abstract}
We consider Hidden Markov Models that emit sequences of observations that are drawn from continuous distributions. For example, such a model may emit a sequence of numbers, each of which is drawn from a uniform distribution, but the support of the uniform distribution depends on the state of the Hidden Markov Model. Such models generalise the more common version where each observation is drawn from a finite alphabet. We prove that one can determine in polynomial time whether two Hidden Markov Models with continuous observations are equivalent.
\end{abstract}

\section{Introduction}

A (discrete-time, finite-state) \emph{Hidden Markov Model (HMM)} (often called \emph{labelled Markov chain}) has a finite set $Q$ of states and for each state a probability distribution over its possible successor states.
For any two states $q, q'$, whenever the state changes from $q$ to~$q'$, the HMM samples and then emits a random observation according to a probability distribution $D(q, q')$.
For example, consider the following diagram visualising a HMM:
\begin{center}
\begin{tikzpicture}[scale=2.5,LMC style]
\node[state] (q1) at (0,0) {$q_1$};
\node[state] (q2) at (1,0) {$q_2$};
\path[->] (q1) edge [loop,out=200,in=160,looseness=10] node[left] {$\frac12 (\frac14 a + \frac34 b)$} (q1);
\path[->] (q1) edge [bend left] node[above] {$\frac12 (a)$} (q2);
\path[->] (q2) edge [loop,out=20,in=-20,looseness=10] node[right] {$\frac23 (b)$} (q2);
\path[->] (q2) edge [bend left] node[below] {$\frac13 (a)$} (q1);
\end{tikzpicture}
\end{center}
In state~$q_1$, the successor state is $q_1$ or~$q_2$, with probability~$\frac12$ each.
Upon transitioning from $q_1$ to itself, observation~$a$ is drawn with probability~$\frac14$ and observation~$b$ is drawn with probability~$\frac34$; upon transitioning from $q_1$ to~$q_2$, observation~$a$ is drawn surely.%
\footnote{One may allow for observations also on the states and not only on the transitions. But such state observations can be equivalently emitted upon leaving the state. Hence we can assume without loss of generality that all observations are emitted on the transitions.}

In this way, a HMM, together with an initial distribution on states, generates a random infinite sequence of observations.
In the example above, if the initial distribution is the Dirac distribution on~$q_1$, the probability that the observation sequence starts with~$a$ is $\frac12 \cdot \frac14 + \frac12$ and the probability that the sequence starts with~$a b$ is $\frac12 \cdot \frac14 \cdot \frac12 \cdot \frac34 + \frac12 \cdot \frac23$.

In the example above the observations are drawn from a finite observation alphabet $\Sigma = \{a,b\}$.
Indeed, in the literature HMMs most commonly have a finite observation alphabet.
In this paper we lift this restriction and consider \emph{continuous-observation} HMMs, by which we mean HMMs as described above, but with continuous observation set~$\Sigma$.
For example, instead of the distributions on~$\{a,b\}$ in the picture above (written there as $(\frac14 a + \frac34 b)$, $(a)$, $(b)$, respectively), we may have distributions on the real numbers. For example in the following diagram, where $U[a,b)$ denotes the uniform distribution on~$[a,b)$ and $\Exp(\lambda)$ denotes the exponential distribution with parameter~$\lambda$:
\begin{center}
\begin{tikzpicture}[scale=2.5,LMC style]
\node[state] (q1) at (0,0) {$q_1$};
\node[state] (q2) at (1,0) {$q_2$};
\path[->] (q1) edge [loop,out=200,in=160,looseness=10] node[left] {$\frac12 \Exp(2)$} (q1);
\path[->] (q1) edge [bend left] node[above] {$\frac12 U[-1,0)$} (q2);
\path[->] (q2) edge [loop,out=20,in=-20,looseness=10] node[right] {$\frac23 \Exp(1)$} (q2);
\path[->] (q2) edge [bend left] node[below] {$\frac13 U[0,2)$} (q1);
\end{tikzpicture}
\end{center}
HMMs, both with finite and infinite observation sets, are widely employed in fields such as speech recognition (see~\cite{Rabiner89} for a tutorial),
gesture recognition~\cite{Gesture},
signal processing~\cite{SignalProcessing},
and climate modeling~\cite{Weather}.
HMMs are heavily used in computational biology~\cite{HMM-comp-biology},
more specifically in DNA modeling~\cite{DNA-modeling} and biological sequence analysis~\cite{durbin1998biological},
including protein structure prediction~\cite{ProteinStructure} 
and gene finding~\cite{GeneFinding}.
In computer-aided verification, HMMs are the most fundamental model for probabilistic systems; model-checking tools such as Prism~\cite{KNP11} and Storm~\cite{Storm} are based on analyzing HMMs efficiently.

One of the most fundamental questions about HMMs is whether two HMMs with initial state distributions are \emph{(trace) equivalent}, i.e., generate the same distribution on infinite observation sequences.
For finite observation alphabets this problem is very well studied and can be solved in polynomial time using algorithms that are based on linear algebra~\cite{schut61,Paz71,Tzeng92,CortesMRdistance}.
Checking trace equivalence is used in the verification of obliviousness and anonymity, properties that are hard to formalize in temporal logics, see, e.g., \cite{EquivForSecurity10,kief11,ModVerif17}.

Although the generalisation to continuous observations (such as passed time, consumed energy, sensor readings) is natural, there has been little work on the algorithmics of such HMMs.
One exception is \emph{continuous-time} Markov chains (CTMCs) \cite{BHHK03,CDKM11} which are similar to HMMs described above, but with two kinds of observations: on the one hand they emit observations from a finite alphabet, but on the other hand they also emit the \emph{time} spent in each state. Typically, each state-to-state transition is labelled with a parameter~$\lambda$; for each transition its time of ``firing'' is drawn from an exponential distribution with parameter~$\lambda$; the transition with the smallest firing time ``wins'' and causes the corresponding change of state.
CTMCs have attractive properties: they are in a sense memoryless, and for many analyses, including model checking, an equivalent discrete-time model can be calculated using an efficient and numerically stable process called \emph{uniformization}~\cite{Grassmann91}.

In~\cite{HKK14} a stochastic model more general than ours was introduced, allowing not only for uncountable sets of observations (called \emph{labels} there), but also for infinite sets of states and actions.
The paper~\cite{HKK14} focuses on bisimulation; trace equivalence is not considered.
It emphasizes nondeterminism, a feature we do not consider here.

To the best of the authors' knowledge, this paper is the first to study equivalence of HMMs with continuous observations.
As continuous functions are part of the input, an equivalence checking algorithm, if it exists (which is not a priori clear), needs to be \emph{symbolic}, i.e., needs to perform computations on functions.
Our contributions are as follows:
\begin{enumerate}
\item
We show in \cref{sec-equivalence-as-orthogonality} that certain aspects of the linear-algebra based approach for checking equivalence of finite-observation HMMs carry over to the continuous case naturally.
In particular, equivalence reduces to orthogonality in a certain vector space of state-indexed real vectors, see \cref{equivifperpspan}.
\item
However, we show in \cref{finitelabred} that in the continuous case there can be \emph{additional} linear dependencies between the observation density functions (which is impossible in the finite case, where the different observations can be assumed linearly independent).
This renders a simple-minded reduction to the finite case incorrect.
Therefore, an equivalence checking algorithm needs to consider the interplay with the vector space from item~1.
\item
For the required computations on the observation density functions we introduce in \cref{sec-linearly-decomposable} \emph{linearly decomposable profile languages}, which are languages (i.e., sets of finite words) whose elements encode density functions on which basis computations can be performed efficiently.
In \cref{sub-profile-example} we provide an extensive example of such a language, encoding (linear combinations of) Gaussian, exponential, and piecewise polynomial density functions.
The proof that this language has the required properties is non-trivial itself and requires \emph{alternant matrices} and comparisons of the tails of various density functions.
\item
In \cref{starredsec} we finally show that HMMs whose observation densities are given in terms of linearly decomposable profile languages can be checked for equivalence in \emph{polynomial time}, by a reduction to the finite-observation case.
We also indicate, in \cref{ex-timing}, how our result can be used to check for susceptibility of certain timing attacks.
\end{enumerate}

%


\section{Preliminaries}
We write $\NN$ for the set of positive integers, $\QQ$ for the set of rationals and $\QQ_+$ for the set of positive rationals.
For $d \in \NN$ and a finite set $Q$ we use the notation $|Q|$ for the number of elements in $Q$, $[d] = \{1, \dots, d\}$ and $[Q] = \{1, \dots, |Q|\}$. Vectors $\mu \in \RR^N$ are viewed as row vectors and we write $\1 = (1, \dots, 1) \in \RR^N$.
Superscript~$T$ denotes transpose; e.g., $\1^T$ is a column vector of ones.
A matrix $M \in \RR^{N \times N}$ is \emph{stochastic} if $M$ is non-negative and $\sum_{j = 1}^{N} M_{i,j} = 1$ for all $i \in [N]$.
For a domain $\Sigma$ and subset $E \subseteq \Sigma$ the \emph{characteristic} function $\chi_E : \Sigma \rightarrow \{0,1\}$ is defined as $\chi_E(x) = 1$ if $x \in E$ and $\chi_E(x) = 0$ otherwise.

Throughout this paper, we use $\Sigma$ to denote a set of \emph{observations}.
We assume $\Sigma$ is a topological space and $(\Sigma, \GG, \lambda)$ is a measure space where all the open subsets of $\Sigma$ are contained within $\GG$ and have non-zero measure. Indeed $\RR$ and the usual Lebesgue measure space on $\RR$ satisfy these assumptions.
The set $\Sigma^n$ is the set of words over~$\Sigma$ of length $n$ and $\Sigma^* = \bigcup_{n = 0}^\infty \Sigma^n$.

A matrix valued function $\Psi : \Sigma \rightarrow [0,\infty)^{N \times N}$ can be integrated element-wise.
We write $\int_E \Psi\,d\lambda$ for the matrix with entries $\left( \int_E \Psi\, d\lambda \right)_{i,j} = \int_E \Psi_{i,j}\, d\lambda$, where $\Psi_{i,j} : \Sigma \rightarrow [0,\infty)$ is defined by $\Psi_{i,j}(x) = \big( \Psi(x) \big)_{i,j}$ for all $x \in \Sigma$.

A function $f : \Sigma \rightarrow \RR^m$ is \emph{piecewise continuous} if there is an open set $C \subseteq \Sigma$, called a \emph{set of continuity}, such that $f$ is continuous on $C$ and for every point $x \in  \Sigma \setminus C$ there is some sequence of points $x_n \in C$ such that $\lim_{n \rightarrow \infty} x_n = x$ and $\lim_{n \rightarrow \infty} f(x_n) = f(x)$. For a non-negative function $f : \Sigma \rightarrow [0,\infty)$ we use the notation ${\supp~f = \{x \in \Sigma \mid f(x) > 0\}}$.

\begin{definition}\label{HMMdef}
A \emph{Hidden Markov Model} (HMM) is a triple $(Q, \Sigma, \Psi)$ where $Q$ is a finite set of states, $\Sigma$ is a set of observations, and the \emph{observation density matrix} $\Psi : \Sigma \rightarrow [0,\infty)^{|Q| \times |Q|}$ specifies the transitions such that $\int_\Sigma \Psi\, d\lambda$ is a stochastic matrix.
\end{definition}
\begin{example} \label{ex-HMMdef}
The second HMM from the introduction is the triple $(\{q_1, q_2\}, \mathbb{R}, \Psi)$ with
\begin{equation}
\Psi(x) \ = \ \begin{pmatrix}
\frac12 \cdot 2 \exp(-2 x) \cdot \chi_{[0,\infty)}(x) && \frac12 \cdot 1 \cdot \chi_{[-1,0)}(x) \\
\frac13 \cdot \frac12 \cdot \chi_{[0,2)}(x) && \frac23 \cdot \exp(-x) \cdot \chi_{[0,\infty)}(x)
\end{pmatrix}\,. \tag*{\qed}
\end{equation}
\end{example}
We assume that $\Psi$ is piecewise continuous and extend $\Psi$ to the mapping $\Psi : \Sigma^* \rightarrow [0,\infty)^{|Q| \times |Q|}$ with $\Psi(x_1 \cdots x_n) = \Psi(x_1) \times \dots \times \Psi(x_n)$ for $x_1, \dots, x_n \in \Sigma$. If $C$ is the set of continuity for $\Psi : \Sigma \rightarrow [0,\infty)^{|Q| \times |Q|}$, then for fixed $n \in \NN$ the restriction $\Psi : \Sigma^n \rightarrow [0,\infty)^{|Q| \times |Q|}$ is piecewise continuous with set of continuity $C^n$. We say that $A \subseteq \Sigma^n$ is a \emph{cylinder set} if $A = A_1 \times \dots \times A_n$ and $A_i \in \GG$ for $i \in [n]$. For every $n$ there is an induced measure space $(\Sigma^n, \GG^n, \lambda^n)$ where $\GG^n$ is the smallest $\sigma$-algebra containing all cylinder sets in~$\Sigma^n$ and $\lambda^n(A_1 \times \dots \times A_n) = \prod_{i = 1}^n \lambda(A_i)$ for any cylinder set $A_1 \times \dots \times A_n$. Let $A \subseteq \Sigma^n$ and write $A \Sigma^\omega$ for the set of infinite words over~$\Sigma$ where the first $n$ observations fall in the set~$A$. Given a HMM $(Q, \Sigma, \Psi)$ and initial distribution $\pi$ on $Q$ viewed as vector $\pi \in \RR^{|Q|}$, there is an induced probability space $(\Sigma^\omega, \GG^*, \PP_\pi)$ where $\Sigma^\omega$ is the set of infinite words over~$\Sigma$, and $\GG^*$ is the smallest $\sigma$-algebra containing (for all $n \in \NN$) all sets $A \Sigma^\omega$ where $A\subseteq \Sigma^n$ is a cylinder set and $\PP_\pi$ is the unique probability measure such that
$\PP_\pi(A \Sigma^\omega) =  \pi \int_A \Psi\, d\lambda^n \1^T$
for any cylinder set $A \subseteq \Sigma^n$.
\begin{definition}
For two distributions $\pi_1$ and $\pi_2$ and a HMM $C = (Q, \Sigma, \Psi)$, we say that $\pi_1$ and~$\pi_2$ are \emph{equivalent}, written $\pi_1 \equiv_C \pi_2$, if $\PP_{\pi_1}(A) = \PP_{\pi_2}(A)$ holds for all measurable subsets $A \subseteq \Sigma^\omega$.
\end{definition}
One could define equivalence of two pairs $(C_1,\pi_1)$ and $(C_2,\pi_2)$ where $C_i = (Q_i, \Sigma, \Psi_i)$ are HMMs and $\pi_i$ are initial distributions for $i=1,2$.
We do not need that though, as we can define, in a natural way, a single HMM over the disjoint union of $Q_1$ and~$Q_2$ and consider instead equivalence of $\pi_1$ and~$\pi_2$ (where $\pi_1,\pi_2$ are appropriately padded with zeros).

Given an observation density matrix $\Psi$, a \emph{functional decomposition} consists of functions $f_k : \Sigma \rightarrow [0,\infty)$ and matrices $P_k \in \RR^{|Q| \times |Q|}$ for $k \in [d]$ such that $\Psi(x) = \sum_{k = 1}^d f_k(x) P_k$ for all $x \in \Sigma$ and $\int_{\Sigma} f_k\, d\lambda = 1$ for all $k \in [d]$. We sometimes abbreviate this decomposition as $\Psi = \sum_{k = 1}^d f_k P_k$ and this notion has a central role in our paper.

\begin{example} \label{ex-functional-decomposition}
The observation density matrix~$\Psi$ from \cref{ex-HMMdef} has a functional decomposition
\begin{align*}
\Psi(x) \ = \
& 2 \exp(-2 x) \chi_{[0,\infty)}(x)
\begin{pmatrix}
\frac12 && 0 \\ 0 && 0
\end{pmatrix}
+
\chi_{[-1,0)}(x)
\begin{pmatrix}
0 && \frac12 \\ 0 && 0
\end{pmatrix}
+ \mbox{} \\
& \frac12 \chi_{[0,2)}(x)
\begin{pmatrix}
0 && 0 \\
\frac13 && 0
\end{pmatrix}
+
\exp(-x) \chi_{[0,\infty)}(x)
\begin{pmatrix}
0 && 0 \\
0 && \frac23
\end{pmatrix}
\,. \tag*{\qed}
\end{align*}
\end{example}

\begin{lemma}\label{stochasticPk}
Let $(Q, \Sigma, \Psi)$ be a HMM.
If $\Psi$ has functional decomposition $\Psi =  \sum_{k = 1}^d f_k P_k$ then $\sum_{k = 1}^d P_k$ is stochastic.
\end{lemma}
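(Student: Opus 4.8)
The plan is to recognise that $\sum_{k=1}^d P_k$ is precisely the matrix $\int_\Sigma \Psi\, d\lambda$, which is stochastic by the very definition of a HMM (\cref{HMMdef}). So the whole proof reduces to a short computation exploiting the linearity of integration together with the normalisation of the functions $f_k$.

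First I would recall that integration of the matrix-valued function $\Psi$ is performed entry-wise, and hence is linear. Since the functional decomposition expresses $\Psi = \sum_{k=1}^d f_k P_k$ as a \emph{finite} linear combination, I can pull the sum and the constant matrices $P_k$ out of the integral:
\begin{equation*}
\int_\Sigma \Psi\, d\lambda \ = \ \int_\Sigma \sum_{k=1}^d f_k P_k\, d\lambda \ = \ \sum_{k=1}^d \left( \int_\Sigma f_k\, d\lambda \right) P_k\,.
\end{equation*}
Invoking the normalisation condition $\int_\Sigma f_k\, d\lambda = 1$ that is part of the definition of a functional decomposition, the right-hand side collapses to $\sum_{k=1}^d P_k$. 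Thus $\sum_{k=1}^d P_k = \int_\Sigma \Psi\, d\lambda$, and the latter is stochastic by \cref{HMMdef}, which completes the argument.

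As for the main obstacle: there is essentially none of substance. The only points requiring (routine) care are that the interchange of the finite sum with the integral is immediately valid, and that every integral in sight is finite — the integrals $\int_\Sigma f_k\, d\lambda$ are finite because they equal $1$, and $\int_\Sigma \Psi\, d\lambda$ is finite because it is stochastic. I expect no genuine difficulty here; the real content of the lemma is simply the observation that the normalisation of the $f_k$ is exactly what is needed to make the matrices $P_k$ behave as transition weights whose sum is the stochastic matrix $\int_\Sigma \Psi\, d\lambda$.
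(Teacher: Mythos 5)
Your proof is correct and is essentially identical to the paper's own argument: both compute $\int_\Sigma \Psi\,d\lambda = \sum_{k=1}^d P_k \int_\Sigma f_k\,d\lambda = \sum_{k=1}^d P_k$ using linearity of the entry-wise integral and the normalisation $\int_\Sigma f_k\,d\lambda = 1$, then invoke the definition of a HMM. No issues.
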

\begin{proof}
By definition of a HMM, $\int_{\Sigma} \Psi\, d\lambda$ is stochastic, and we have
\begin{equation*}
\int_{\Sigma} \Psi\, d\lambda = \int_{\Sigma} \sum_{k = 1}^d f_k P_k\, d\lambda =  \sum_{k = 1}^d P_k  \int_{\Sigma} f_k\, d\lambda =  \sum_{k = 1}^d P_k.\qedhere
\end{equation*}
\end{proof}
When $\Sigma$ is finite, it follows that $\int_\Sigma \Psi\, d\lambda = \sum_{a \in \Sigma} \Psi(a)$. Hence $\sum_{a \in \Sigma} \Psi(a)$ is stochastic.

\subparagraph*{Encoding}
For computational purposes we assume that rational numbers are represented as ratios of integers in binary.
The initial distribution of a HMM with state set~$Q$ is given as a vector $\pi \in \QQ^{|Q|}$.
We also need to encode continuous functions, in particular, density functions such as Gaussian, exponential or piecewise-polynomial functions.
A \emph{profile} is a finite word (i.e., string) that describes a continuous function.
It may consist of (an encoding of) a function type and its parameters. For example, the profile  $(\mathcal{N}, \mu, \sigma)$ may denote a Gaussian (also called normal) distribution with mean $\mu \in \QQ$ and standard deviation $\sigma \in \QQ_+$. A profile may also consist of a description of a rational linear combination of such building blocks.
For any profile~$\gamma$ we write $[\![\gamma]\!] : \Sigma \rightarrow [0,\infty)$ for the function it encodes.
For example, a profile $\gamma = (\mathcal{N}, \mu, \sigma)$ with $\mu \in \QQ,\ \sigma \in \QQ_+$ may encode the function $[\![\gamma]\!] : \RR \rightarrow [0,\infty)$ given as $[\![\gamma]\!](x) = \frac{1}{\sigma\sqrt{2\pi}} \exp{- \frac{(x - \mu)^2}{2\sigma^2}}$.
Without restricting ourselves to any particular encoding, we assume that $\Gamma$ is a \emph{profile language}, i.e., a finitely presented but usually infinite set of valid profiles. For any $\Gamma_0 \subseteq \Gamma$ we write $[\![\Gamma_0]\!] = \{[\![\gamma]\!] \mid \gamma \in \Gamma_0\}$.

We use profiles to encode HMMs $C = (Q, \Sigma, \Psi)$:
we say that $C$ is \emph{over}~$\Gamma$ if the observation density matrix~$\Psi$ is given as a matrix of pairs $(p_{i,j}, \gamma_{i,j}) \in \QQ_+  \times \Gamma$ such that $\Psi_{i,j} = p_{i,j} [\![\gamma_{i,j}]\!]$ and $\int_{\Sigma} [\![\gamma_{i,j}]\!]\,d\lambda = 1$ hold for all $i,j \in [Q]$. In this way the $p_{i,j}$ form the transition probabilities between states and the $\gamma_{i,j}$ encode the probability densities of the observations upon each transition.

\begin{example} \label{ex-encoding-prelims}
For a suitable profile language~$\Gamma$, the HMM from \cref{ex-HMMdef} may be over~$\Gamma$, with the observation density matrix given as
\begin{equation}
\begin{pmatrix}
(\frac12, (\Exp,2)) && (\frac12, (U,-1,0)) \\
(\frac13, (U,0,2))  && (\frac23, (\Exp,1))
\end{pmatrix}\,. \tag*{\qed}
\end{equation}
\end{example}
The observation density matrix~$\Psi$ of a HMM $(Q, \Sigma, \Psi)$ with \emph{finite}~$\Sigma$ can be given as a
list of matrices $\Psi(a) \in \QQ_+^{|Q| \times |Q|}$ for all $a \in \Sigma$ such that $\sum_{a \in \Sigma} \Psi(a)$ is a stochastic matrix.

\section{Equivalence as Orthogonality} \label{sec-equivalence-as-orthogonality}

For finite-observation HMMs it is well known~\cite{schut61,Paz71,Tzeng92,CortesMRdistance} that two initial distributions given as vectors $\pi_1, \pi_2 \in \RR^{|Q|}$ are equivalent if and only if $\pi_1 - \pi_2$ is orthogonal (written as~$\perp$) to a certain vector space.
Indeed, this property holds more generally:

\begin{restatable}{proposition}{equivifperpspan}\label{equivifperpspan}
Consider a HMM $(Q, \Sigma, \Psi)$.
For any $\pi_1, \pi_2 \in \RR^{|Q|}$ we have
\[\pi_1 \equiv \pi_2 \ \iff \ \pi_1 - \pi_2 \perp \Span \{\Psi(w)\1^T \mid w \in \Sigma^*\}.\]
\end{restatable}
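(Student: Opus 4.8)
The plan is to translate the measure-theoretic statement about $\Sigma^\omega$ into a pointwise statement about a scalar function and then prove the two implications separately. Write $v = \pi_1 - \pi_2$ and, for each $n$, define $g_n(w) = v\,\Psi(w)\1^T$ for $w \in \Sigma^n$; this is a real-valued function, and the right-hand side of the claim says exactly that $g_n \equiv 0$ on $\Sigma^n$ for every $n$. The bridge between the two sides is the defining identity $\PP_{\pi}(A\Sigma^\omega) = \pi\int_A \Psi\,d\lambda^n\1^T$ for cylinder sets $A \subseteq \Sigma^n$, together with the observation that the sets $A\Sigma^\omega$ (with $A$ ranging over cylinder sets and $n$ over $\NN$) form a $\pi$-system generating $\GG^*$, so that two probability measures agreeing on all of them agree everywhere (uniqueness of measure extension / Dynkin).

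For the direction ``$\Leftarrow$'', assume $g_n \equiv 0$ for all $n$. Then for any cylinder set $A \subseteq \Sigma^n$ I pull the constant vector $v$ inside the integral to obtain $v\int_A\Psi\,d\lambda^n\1^T = \int_A v\,\Psi(w)\1^T\,d\lambda^n(w) = \int_A g_n\,d\lambda^n = 0$, hence $\PP_{\pi_1}(A\Sigma^\omega) = \PP_{\pi_2}(A\Sigma^\omega)$. Since these sets generate $\GG^*$ and form a $\pi$-system, the two probability measures coincide, i.e.\ $\pi_1 \equiv \pi_2$. This direction is routine.

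For the direction ``$\Rightarrow$'', assume $\pi_1 \equiv \pi_2$; the same identity now yields $\int_A g_n\,d\lambda^n = 0$ for every cylinder set $A \subseteq \Sigma^n$, and I must upgrade this ``zero integral over every box'' conclusion to $g_n \equiv 0$ everywhere. By the paragraph preceding the proposition, $\Psi$ restricted to $\Sigma^n$ is piecewise continuous with set of continuity $C^n$, so $g_n$, being a fixed linear combination of the entries of $\Psi$, is continuous on the open set $C^n$ and inherits the boundary-approximation property on $\Sigma^n \setminus C^n$. First I handle $C^n$: if $g_n(w_0) \neq 0$ for some $w_0 \in C^n$, then continuity and openness give an open box $B = B_1 \times \cdots \times B_n \subseteq C^n$ around $w_0$ on which $g_n$ keeps a constant nonzero sign; but $B$ is a cylinder set with $\lambda^n(B) = \prod_i \lambda(B_i) > 0$ because nonempty open sets have nonzero measure by assumption, so $\int_B g_n\,d\lambda^n \neq 0$, a contradiction. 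Hence $g_n \equiv 0$ on $C^n$. Finally, for $w \in \Sigma^n \setminus C^n$, piecewise continuity supplies points $w_m \in C^n$ with $w_m \to w$ and $\Psi(w_m) \to \Psi(w)$, whence $g_n(w) = \lim_m v\,\Psi(w_m)\1^T = \lim_m g_n(w_m) = 0$. Thus $g_n \equiv 0$ on all of $\Sigma^n$ for every $n$, which is the claimed orthogonality.

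The main obstacle is precisely this last upgrade in the ``$\Rightarrow$'' direction: equivalence delivers only information about integrals, i.e.\ an almost-everywhere statement, whereas the span on the right-hand side ranges over \emph{all} words $w$, including the measure-zero set of discontinuity points $\Sigma^n \setminus C^n$. Obtaining a genuinely everywhere-pointwise conclusion is what forces the use of both structural hypotheses: that nonempty open sets have positive measure (to convert nonvanishing of a continuous $g_n$ into a nonzero integral) and that $\Psi$ is piecewise continuous with its specific boundary-approximation property (to reach the discontinuity points as limits of continuity points).
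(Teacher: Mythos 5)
Your proposal is correct and follows essentially the same route as the paper: the paper's key Lemma (its Lemma~\ref{pwcontinuousspanlemma}) performs exactly your ``upgrade from zero integrals over boxes to pointwise vanishing'', using the same two hypotheses (open sets have positive measure, and $\Psi$ restricted to $\Sigma^n$ is piecewise continuous with set of continuity $C^n$). The only cosmetic differences are that you make the $\pi$-system/uniqueness-of-extension step explicit and split the pointwise argument into the cases $w \in C^n$ and $w \in \Sigma^n \setminus C^n$, whereas the paper handles an arbitrary $w$ in one pass by approximating it with continuity points and then shrinking a ball around one of them.
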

In the finite-observation case, \cref{equivifperpspan} leads to an efficient algorithm for deciding equivalence: it suffices to compute a basis for $\mathcal{V} = \Span \{\Psi(w)\1^T \mid w \in \Sigma^*\}$.
This can be done using a fixed-point algorithm that computes a sequence of (bases of) increasing subspaces of~$\mathcal{V}$: start with $\mathcal{B} = \{\1^T\}$, and as long as there is $a \in \Sigma$ and $v \in \mathcal{B}$ such that $\Psi(a) v \not\in \Span \mathcal{B}$, add $\Psi(a) v$ to~$\mathcal{B}$.
Since $\dim \mathcal{V} \le |Q|$, this algorithm terminates after at most $|Q|$ iterations, and returns $\mathcal{B}$ such that $\Span \mathcal{B} = \mathcal{V}$.
It is then easy to check whether $\pi_1 - \pi_2 \perp \mathcal{V}$.
It follows:
\begin{proposition} \label{prop-finite-HMM}
Given a HMM $(Q, \Sigma, \Psi)$ with finite~$\Sigma$ and initial distributions $\pi_1, \pi_2 \in \QQ^{|Q|}$, it is decidable in polynomial time whether $\pi_1 \equiv \pi_2$.
\end{proposition}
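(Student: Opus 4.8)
The plan is to turn \cref{equivifperpspan} into an effective procedure. By that proposition, $\pi_1 \equiv \pi_2$ holds if and only if $\pi_1 - \pi_2 \perp \mathcal{V}$, where $\mathcal{V} = \Span \{\Psi(w)\1^T \mid w \in \Sigma^*\} \subseteq \RR^{|Q|}$. Hence it suffices to compute a finite basis of $\mathcal{V}$ and then test orthogonality of $\pi_1 - \pi_2$ against each basis vector; the latter is a polynomial number of dot products of rational vectors and so is unproblematic. The real work is computing the basis.

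First I would observe that, since $\Psi(aw) = \Psi(a)\Psi(w)$ and $\Psi(\varepsilon)$ is the identity, the set $\{\Psi(w)\1^T \mid w \in \Sigma^*\}$ is generated from the single vector $\1^T$ by repeated left-multiplication by the matrices $\Psi(a)$, $a \in \Sigma$. Thus $\mathcal{V}$ is exactly the smallest linear subspace of $\RR^{|Q|}$ that contains $\1^T$ and is invariant under each map $v \mapsto \Psi(a) v$, which is precisely the space computed by the fixed-point iteration sketched before the statement: maintain a linearly independent set $\mathcal{B}$, initialised to $\{\1^T\}$, and while there exist $a \in \Sigma$ and $v \in \mathcal{B}$ with $\Psi(a) v \notin \Span \mathcal{B}$, add $\Psi(a) v$ to $\mathcal{B}$. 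Correctness and termination follow from a dimension argument: each addition strictly increases $\dim \Span \mathcal{B}$, which never exceeds $|Q|$, so there are at most $|Q| - 1$ additions and the loop halts; at termination $\Span \mathcal{B}$ contains $\1^T$ and, by linearity, is invariant under every $\Psi(a)$, giving $\mathcal{V} \subseteq \Span \mathcal{B}$, while every vector ever placed in $\mathcal{B}$ has the form $\Psi(w)\1^T$, giving $\Span \mathcal{B} \subseteq \mathcal{V}$; hence $\Span \mathcal{B} = \mathcal{V}$. The number of membership tests $\Psi(a) v \in \Span \mathcal{B}$ is $O(|Q|^2 |\Sigma|)$, each realisable by Gaussian elimination over $\QQ$.

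The step that will need the most care is the running-time bound in the \emph{bit} model, rather than merely bounding the number of arithmetic operations. Each membership test and each update of $\mathcal{B}$ is a linear-algebra computation over $\QQ$, and one must check that the numerators and denominators produced do not grow exponentially in the input encoding. The hard part is therefore invoking the standard polynomial bound on the bit-size of the intermediate quantities arising in (fraction-free) Gaussian elimination over the rationals; together with the polynomially many arithmetic operations established above, this yields a polynomial overall running time, and the final orthogonality check inherits the same bound. This completes the argument.
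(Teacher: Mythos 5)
Your proposal is correct and follows essentially the same route as the paper: reduce equivalence to orthogonality against $\mathcal{V} = \Span\{\Psi(w)\1^T \mid w \in \Sigma^*\}$ via \cref{equivifperpspan}, compute a basis of $\mathcal{V}$ by the fixed-point iteration starting from $\{\1^T\}$ and closing under the maps $v \mapsto \Psi(a)v$, and terminate by the dimension bound $\dim \mathcal{V} \le |Q|$. Your added care about the invariance argument for correctness and the bit-size of intermediate rationals in Gaussian elimination only makes explicit what the paper leaves implicit.
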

This is not an effective algorithm when $\Sigma$ is infinite.

\section{Labelling Reductions}\label{finitelabred}
Our goal is to reduce in polynomial time the equivalence problem in continuous-observation HMMs to the equivalence problem in finite-observation HMMs.
Since the latter is decidable in polynomial time by \cref{prop-finite-HMM}, a polynomial time algorithm for deciding equivalence in continuous-observation HMMs follows.

Towards this objective, consider a reduction where each continuous density function is given a label and these labels form the observation alphabet of a finite-observation HMM. For example consider the chain on the left in the diagram below. This disconnected HMM emits letters from two distinct normal distributions with profiles $(\mathcal{N}, 0, 1)$ and $(\mathcal{N}, 1, 2)$. Assigning each distribution letters $a,b$ respectively yields the HMM given on the right.
Since in the right chain states $q_1$ and $q_2$ are equivalent so too are the same labelled states in the continuous chain.

\begin{center}
	\begin{tikzpicture}[scale=2.3,LMC style]
	\node[state] (q1) at (-0.25,0) {$q_1$};
	\node[state] (q2) at (1,0.5) {$q_2$};
	\node[state] (q3) at (1,-0.5) {$q_3$};
	\path[->] (q1) edge [loop,out=110,in=70,looseness=10] node[above] {$\frac23 (\mathcal{N}, 0, 1) + \frac13 (\mathcal{N}, 1, 2)$} (q1);
	
	\path[->] (q3) edge [bend left] node[left] {$\frac23 (\mathcal{N}, 0, 1)$} (q2);
	\path[->] (q2) edge [bend left] node[right] {$\frac13 (\mathcal{N}, 1, 2)$} (q3);
	\path[->] (q3) edge [loop,out=290,in=250,looseness=10] node[pos=0.75,left] {$\frac13 (\mathcal{N}, 1, 2)$} (q3);
	\path[->] (q2) edge [loop,out=110,in=70,looseness=10] node[pos=0.75,right] {$\frac23 (\mathcal{N}, 0, 1)$} (q2);
	\end{tikzpicture}
\hfill
	\begin{tikzpicture}[scale=2.3,LMC style]
	\node[state] (q4) at (2.75,0) {$q_1$};
	\node[state] (q5) at (4,0.5) {$q_2$};
	\node[state] (q6) at (4,-0.5) {$q_3$};
	\path[->] (q4) edge [loop,out=110,in=70,looseness=10] node[above] {$\frac23 a + \frac13 b$} (q4);
	
	\path[->] (q6) edge [bend left] node[left] {$\frac23 (a)$} (q5);
	\path[->] (q5) edge [bend left] node[right] {$\frac13 (b)$} (q6);
	\path[->] (q6) edge [loop,out=290,in=250,looseness=10] node[pos=0.75,left] {$\frac13 (b)$} (q6);
	\path[->] (q5) edge [loop,out=110,in=70,looseness=10] node[pos=0.75,right] {$\frac23 (a)$} (q5);
	\end{tikzpicture}
\end{center}
More rigorously, if $C = (Q, \Sigma, \Psi)$ is a HMM over $\Gamma = \{\beta_1, \ldots, \beta_K\}$ and $\Psi$ is encoded as a matrix of coefficient-profile pairs $(p_{i,j}, \gamma_{i,j}) \in \QQ_+ \times \Gamma$ then we call the \emph{labelling reduction} the HMM $(Q, \hat{\Sigma}, \hat{M})$ where $\hat{\Sigma} = \{a_1, \dots, a_K\}$ is an alphabet of fresh observations and 
\[
\hat{M}_{i,j}(a_k) = \begin{cases}
p_{i,j} & \gamma_{i,j} = \beta_k \\
0 & \text{otherwise}.
\end{cases}
\]
Since $\Psi$ has functional decomposition $\Psi = \sum_{k = 1}^K [\![\beta_k]\!] \hat{M}(a_k)$, it follows by \Cref{stochasticPk} that $\sum_{k = 1}^K \hat{M}(a_k)$ is stochastic and the labelling reduction is a well defined HMM which may be computed in polynomial time. As discussed in the previous example, equivalence in the labelling reduction implies equivalence in the original chain:

\begin{proposition}\label{redbydist}
Let $C = (Q, \Sigma, \Psi)$ be a HMM with labelling reduction $L = (Q, \hat{\Sigma}, \hat{M})$. Then for any initial distributions $\pi_1$ and $\pi_2$
\[\pi_1 \equiv_L \pi_2 \implies \pi_1 \equiv_C \pi_2.\]
\end{proposition}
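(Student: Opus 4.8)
The plan is to apply \cref{equivifperpspan} to both $C$ and its labelling reduction~$L$, thereby turning the implication into a containment of two vector spaces. Applied to the finite-observation HMM~$L$, \cref{equivifperpspan} gives $\pi_1 \equiv_L \pi_2$ if and only if $\pi_1 - \pi_2 \perp \Span\{\hat{M}(v)\1^T \mid v \in \hat{\Sigma}^*\}$; applied to~$C$, it gives $\pi_1 \equiv_C \pi_2$ if and only if $\pi_1 - \pi_2 \perp \Span\{\Psi(w)\1^T \mid w \in \Sigma^*\}$. Consequently it suffices to prove the inclusion
\[
\Span\{\Psi(w)\1^T \mid w \in \Sigma^*\} \ \subseteq \ \Span\{\hat{M}(v)\1^T \mid v \in \hat{\Sigma}^*\},
\]
because any vector $\pi_1 - \pi_2$ orthogonal to the larger right-hand space is then automatically orthogonal to the smaller left-hand space.

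To establish the inclusion I would fix an arbitrary word $w = x_1 \cdots x_n \in \Sigma^n$ and expand $\Psi(w) = \Psi(x_1) \cdots \Psi(x_n)$ using the functional decomposition $\Psi(x) = \sum_{k=1}^K [\![\beta_k]\!](x)\, \hat{M}(a_k)$. Multiplying out the product of these $n$ sums and using multiplicativity $\hat{M}(a_{k_1}) \cdots \hat{M}(a_{k_n}) = \hat{M}(a_{k_1} \cdots a_{k_n})$ yields
\[
\Psi(w) \ = \ \sum_{(k_1,\dots,k_n) \in [K]^n} \Big( \prod_{i=1}^n [\![\beta_{k_i}]\!](x_i) \Big)\, \hat{M}(a_{k_1} \cdots a_{k_n}).
\]
Each index tuple $(k_1,\dots,k_n)$ names a word $v = a_{k_1}\cdots a_{k_n} \in \hat{\Sigma}^n$, and for the fixed word~$w$ the coefficient $\prod_{i=1}^n [\![\beta_{k_i}]\!](x_i)$ is a fixed real number. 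Multiplying on the right by~$\1^T$ then exhibits $\Psi(w)\1^T$ as a finite real linear combination of the vectors $\hat{M}(v)\1^T$ with $v \in \hat{\Sigma}^n$, so $\Psi(w)\1^T$ lies in the span on the right. Taking the span over all $w \in \Sigma^*$ gives the desired inclusion.

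This argument is essentially a bookkeeping identity, so I do not expect a genuine obstacle; the one point requiring care is to recognise, in the expansion of the product of sums, that the density values $[\![\beta_{k_i}]\!](x_i)$ act as scalar coefficients (the word~$w$ being held fixed) while the matrices $\hat{M}(a_{k_i})$ collapse, by the definition of $\hat{M}$ on words, into a single factor $\hat{M}(v)$ indexed by one word $v \in \hat{\Sigma}^n$ of the same length. This is exactly what places each $\Psi(w)\1^T$ inside the span generated by~$L$. Note that only the inclusion in this direction holds in general: when there are linear dependencies among the density functions $[\![\beta_k]\!]$, the span for~$C$ can be strictly smaller than that for~$L$, so the reverse implication fails — precisely the phenomenon highlighted in \cref{finitelabred}, and the reason this proposition is stated as a one-way implication.
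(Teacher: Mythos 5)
Your proof is correct and follows essentially the same route as the paper: the paper notes the functional decomposition $\Psi = \sum_{k}[\![\beta_k]\!]\hat{M}(a_k)$ and invokes \cref{langequiv}, whose proof is exactly your product-expansion argument combined with \cref{equivifperpspan}. You have simply inlined \cref{langequiv} rather than citing it.
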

For the proof of \Cref{redbydist} we use the following lemma 
which will be re-used in \cref{starredsec}.

\begin{restatable}{lemma}{langequiv}\label{langequiv}
Let $C_1 = (Q, \Sigma_1, \Psi_1)$ and $C_2 = (Q, \Sigma_2, \Psi_2)$ be two HMMs with the same state space~$Q$.
Suppose that $\Span \{\Psi_1(x) \mid x \in \Sigma_1\} \subseteq \Span \{\Psi_2(x) \mid x \in \Sigma_2\}$.
Then, for any two initial distributions $\pi_1$ and $\pi_2$,
\[\pi_1 \equiv_{C_2} \pi_2 \implies \pi_1 \equiv_{C_1} \pi_2.\]
\end{restatable}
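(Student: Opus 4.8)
The plan is to establish the claimed implication by relating the span condition on single-letter matrices to the span condition appearing in \Cref{equivifperpspan}. The key observation is that \Cref{equivifperpspan} reduces equivalence in each $C_i$ to orthogonality of $\pi_1 - \pi_2$ against the vector space $\mathcal{V}_i = \Span\{\Psi_i(w)\1^T \mid w \in \Sigma_i^*\}$. So it suffices to prove $\mathcal{V}_1 \subseteq \mathcal{V}_2$: then $\pi_1 - \pi_2 \perp \mathcal{V}_2$ immediately gives $\pi_1 - \pi_2 \perp \mathcal{V}_1$, which by \Cref{equivifperpspan} is exactly the desired conclusion $\pi_1 \equiv_{C_2} \pi_2 \implies \pi_1 \equiv_{C_1} \pi_2$.

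To prove $\mathcal{V}_1 \subseteq \mathcal{V}_2$, I would first promote the single-letter span hypothesis to a span hypothesis on words. That is, from $\Span\{\Psi_1(x) \mid x \in \Sigma_1\} \subseteq \Span\{\Psi_2(x) \mid x \in \Sigma_2\}$ I would show, by induction on word length $n$, that
\[
\Span\{\Psi_1(w) \mid w \in \Sigma_1^n\} \ \subseteq \ \Span\{\Psi_2(w) \mid w \in \Sigma_2^n\}.
\]
The base case $n = 1$ is the hypothesis (and $n=0$ gives the identity matrix in both spans). For the inductive step, any $\Psi_1(wx)$ with $w \in \Sigma_1^n$ and $x \in \Sigma_1$ factors as $\Psi_1(w)\Psi_1(x)$; writing $\Psi_1(w)$ as a linear combination of matrices $\Psi_2(u)$ with $u \in \Sigma_2^n$ and $\Psi_1(x)$ as a linear combination of matrices $\Psi_2(y)$ with $y \in \Sigma_2$, the product expands bilinearly into a linear combination of matrices $\Psi_2(u)\Psi_2(y) = \Psi_2(uy)$ with $uy \in \Sigma_2^{n+1}$, which lies in the target span. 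The crucial point making this work is that the span relation is multiplicatively closed: a product of two linear combinations of matrix monomials is again a linear combination of matrix monomials of the concatenated words.

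Once the word-level containment holds for every $n$, I right-multiply by the column vector $\1^T$: since $\Psi_1(w)\1^T$ is a linear combination of the vectors $\Psi_2(u)\1^T$ whenever $\Psi_1(w)$ is the corresponding linear combination of the matrices $\Psi_2(u)$, we obtain $\Psi_1(w)\1^T \in \Span\{\Psi_2(u)\1^T \mid u \in \Sigma_2^*\} = \mathcal{V}_2$ for every $w \in \Sigma_1^*$. Taking the span over all such $w$ yields $\mathcal{V}_1 \subseteq \mathcal{V}_2$, completing the argument.

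I expect the main (though still mild) obstacle to be stating the inductive step cleanly, specifically ensuring the bilinear expansion of the product is handled carefully so that matrix multiplication interacts correctly with taking spans — one must note that $\Span$ is preserved under left and right matrix multiplication, so that $M \cdot \Span(S) \subseteq \Span(M \cdot S)$ and the concatenation identity $\Psi_2(u)\Psi_2(y) = \Psi_2(uy)$ keeps every resulting monomial inside the word span of length $n+1$. Everything else is routine linear algebra, and the final passage from matrices to the $\1^T$-vectors and the appeal to \Cref{equivifperpspan} is immediate.
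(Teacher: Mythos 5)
Your proposal is correct and follows essentially the same route as the paper: both reduce the claim via \Cref{equivifperpspan} to the containment $\Span\{\Psi_1(w)\1^T \mid w \in \Sigma_1^*\} \subseteq \Span\{\Psi_2(w)\1^T \mid w \in \Sigma_2^*\}$, and both obtain it by expanding $\Psi_1(x_1)\cdots\Psi_1(x_N)$ multilinearly using the single-letter span hypothesis and the concatenation identity $\Psi_2(u)\Psi_2(y)=\Psi_2(uy)$. The only cosmetic difference is that you organise the expansion as an induction on word length while the paper writes out the full product of sums in one step.
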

\begin{proof}[Proof of \cref{redbydist}]
$\Psi$ has a functional decomposition $\Psi = \sum_{k = 1}^K [\![\beta_k]\!] \hat{M}(a_k)$.
Thus, $\Span \{\Psi(x) \mid x \in \Sigma\} \subseteq \Span \{\hat{M}(a_k) \mid a_k \in \hat{\Sigma}\}$
and the statement follows by \cref{langequiv}.
\end{proof}

\begin{example}\label{finiteredproblem}
Consider the HMMs in the diagram below.
The HMM on the left is a continuous-observation chain where $D$ and $D'$ are distributions on $[0,1]$ with probability density functions $2x \chi_{[0,1)}(x)$ and $2(1 - x)\chi_{[0,1)}(x)$ respectively, and $U[a,b)$ is the uniform distribution on $[a,b)$. The HMM on the right is the corresponding labelling reduction.

Since $U[0,1) = \frac{1}{2} D + \frac{1}{2} D'$, (the Dirac distributions on) states $q_1$ and $q_4$ are equivalent but as the distributions $U[0,1), D, D'$ are distinct, they get assigned different labels $a, b, c$, respectively in the labelling reduction. The states $q_1$ and $q_4$ are therefore not equivalent in the right chain.
\begin{center}
	\begin{tikzpicture}[scale=2.3,LMC style]
	\node[state] (q1) at (0,0) {$q_1$};
	\node[state] (q2) at (1,0.375) {$q_2$};
	\node[state] (q3) at (1,-0.375) {$q_3$};
	\node[state] (q4) at (2,0) {$q_4$};
	\path[->] (q2) edge [loop,out=110,in=70,looseness=10] node[pos=0.75,right] {$1 U[0,2)$} (q2);
	\path[->] (q4) edge node[pos=0.75,right,yshift=1mm] {$1 U[0,1)$} (q2);
	\path[->] (q3) edge node[right,pos=0.4] {$1 U[0,2)$} (q2);
	\path[->] (q1) edge node[above] {$\frac{1}{2} D$} (q2);
	\path[->] (q1) edge node[above,yshift=1mm] {$\frac{1}{2} D'$} (q3);
	
	\node[state] (q5) at (3,0) {$q_1$};
	\node[state] (q6) at (4,0.375) {$q_2$};
	\node[state] (q7) at (4,-0.375) {$q_3$};
	\node[state] (q8) at (5,0) {$q_4$};
	\path[->] (q6) edge [loop,out=110,in=70,looseness=10] node[pos=0.75,right] {$1 (d)$} (q6);
	\path[->] (q8) edge node[above,pos=0.4] {$1 (a)$} (q6);
	\path[->] (q7) edge node[right,pos=0.4] {$1 (d)$} (q6);
	\path[->] (q5) edge node[above] {$\frac{1}{2} (b)$} (q6);
	\path[->] (q5) edge node[above,yshift=1mm] {$\frac{1}{2} (c)$} (q7);
	\end{tikzpicture}
\end{center}
\end{example}

\section{Linearly Decomposable Profile Languages} \label{sec-linearly-decomposable}

\Cref{finiteredproblem} shows that the linear combination of two continuous distributions can ``imitate'' a single distribution. Therefore we consider the transition densities as part of a vector space of functions. In the usual way $\mathcal{L}_1(\Sigma, \lambda)$ is the quotient vector space where functions that differ only on a $\lambda$-null set are identified. In particular, when $\Sigma \subseteq \RR$ and $\lambda$ is the Lebesgue measure~$\Leb$, the functions $\chi_{[a,b)}$ and $\chi_{(a,b]}$ are considered the same.

Let $\Gamma$ be a profile language with $[\![\Gamma]\!] \subseteq \mathcal{L}_1(\Sigma, \lambda)$. We say that $\Gamma$ is \emph{linearly decomposable} if for every finite set $\{\gamma_1, \dots, \gamma_n\} = \Gamma_0 \subseteq \Gamma$ one can compute in polynomial time profiles $\beta_1, \dots, \beta_m \in \Gamma_0$ such that $\{[\![\beta_1]\!], \dots, [\![\beta_m]\!]\}$ is a basis for $\Span \{[\![\gamma_1]\!], \dots, [\![\gamma_n]\!]\}$ (hence $m \leq n$), and further a set of coefficients $b_{i,j} \in \QQ$ for $i \in [n], j \in [m]$ such that
\begin{equation*}
[\![\gamma_i]\!] = \sum_{j = 1}^m b_{i,j}[\![\beta_j]\!] \text{ for all $i \in [n]$.}
\end{equation*}
The following theorem is the main result of this paper:
\begin{theorem}\label{computationequivalencethm}
Given a HMM $(Q, \Sigma, \Psi)$ over a linearly decomposable profile language, and initial distributions $\pi_1, \pi_2 \in \QQ^{|Q|}$, it is decidable in polynomial time (in the size of the encoding) whether $\pi_1 \equiv \pi_2$.
\end{theorem}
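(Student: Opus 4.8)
The plan is to compute, in polynomial time, the vector space $\mathcal{V} = \Span\{\Psi(w)\1^T \mid w \in \Sigma^*\}$ appearing in \cref{equivifperpspan}, and then test orthogonality of $\pi_1 - \pi_2$ against it. First I would collect the finite profile set $\Gamma_0 = \{\gamma_{i,j} \mid i,j \in [Q]\}$ occurring in the encoding of~$\Psi$ and, using that $\Gamma$ is linearly decomposable, compute in polynomial time a basis $\{[\![\beta_1]\!], \dots, [\![\beta_m]\!]\}$ of $\Span[\![\Gamma_0]\!]$ together with rational coefficients expressing every $[\![\gamma_{i,j}]\!]$ in this basis. Substituting these into $\Psi_{i,j} = p_{i,j}[\![\gamma_{i,j}]\!]$ yields a functional decomposition $\Psi = \sum_{k=1}^m [\![\beta_k]\!] P_k$ with rational matrices $P_k \in \QQ^{|Q| \times |Q|}$ (possibly with negative entries, which is why a naive labelling reduction fails, cf.\ \cref{finiteredproblem}), computable in polynomial time.

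Next I would prove the key identity
\[
\mathcal{V} \ = \ \mathcal{W} \ := \ \Span\{P_{k_1} \cdots P_{k_n} \1^T \mid n \ge 0,\ k_1, \dots, k_n \in [m]\}.
\]
The inclusion $\mathcal{V} \subseteq \mathcal{W}$ is immediate: expanding $\Psi(x_1 \cdots x_n) = \prod_{i=1}^n \big(\sum_k [\![\beta_k]\!](x_i) P_k\big)$ writes every $\Psi(w)\1^T$ as a scalar combination of vectors $P_{k_1} \cdots P_{k_n} \1^T$. The reverse inclusion $\mathcal{W} \subseteq \mathcal{V}$ is the technical heart. Since $\mathcal{V}$ is a finite-dimensional, hence closed, subspace of $\RR^{|Q|}$ and $\Psi(w)\1^T \in \mathcal{V}$ for every $w$, every integral $\int_A \Psi\, d\lambda^n\, \1^T$ over a cylinder set~$A$ also lies in~$\mathcal{V}$: for any $u \perp \mathcal{V}$ we have $\langle \Psi(w)\1^T, u\rangle = 0$ for all $w$, so $\langle \int_A \Psi\, d\lambda^n\, \1^T, u\rangle = \int_A \langle \Psi\, \1^T, u\rangle\, d\lambda^n = 0$, whence the integral is orthogonal to every such~$u$. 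Because $[\![\beta_1]\!], \dots, [\![\beta_m]\!]$ are linearly independent in $\mathcal{L}_1(\Sigma,\lambda)$, I can choose measurable sets $E^{(1)}, \dots, E^{(m)}$ making the moment matrix $M = \big(\int_{E^{(l)}} [\![\beta_k]\!]\, d\lambda\big)_{l,k}$ invertible; otherwise some nontrivial combination of the $[\![\beta_k]\!]$ would integrate to zero on every set and hence vanish. For the cylinder sets $E^{(l_1)} \times \cdots \times E^{(l_n)}$ the integral above equals
\[
\int_{E^{(l_1)} \times \cdots \times E^{(l_n)}} \Psi\, d\lambda^n\, \1^T \ = \ \sum_{(k_1,\dots,k_n) \in [m]^n} \Big(\prod_{i=1}^n M_{l_i,k_i}\Big) P_{k_1} \cdots P_{k_n} \1^T,
\]
so the coefficient of $P_{k_1} \cdots P_{k_n} \1^T$ is the Kronecker-product entry $(M^{\otimes n})_{(l_1 \dots l_n),(k_1 \dots k_n)}$. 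As $M^{\otimes n}$ is invertible, inverting it expresses each $P_{k_1} \cdots P_{k_n} \1^T$ as a rational combination of vectors of~$\mathcal{V}$, giving $\mathcal{W} \subseteq \mathcal{V}$.

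Finally I would compute a basis of~$\mathcal{W}$ using the fixed-point procedure described after \cref{equivifperpspan}, which is purely linear-algebraic and therefore applies verbatim to the (not necessarily non-negative) matrices $P_1, \dots, P_m$: start from $\{\1^T\}$ and close under left multiplication by the~$P_k$. This halts within $|Q|$ rounds since $\dim \mathcal{W} \le |Q|$, and standard Gaussian elimination keeps all intermediate numbers of polynomially bounded bit-size. By $\mathcal{V} = \mathcal{W}$ and \cref{equivifperpspan}, $\pi_1 \equiv \pi_2$ holds iff $\pi_1 - \pi_2$ is orthogonal to this basis, a final polynomial-time check. I expect the \emph{reverse inclusion} $\mathcal{W} \subseteq \mathcal{V}$ to be the main obstacle: one must convert linear independence of the density basis into the ability to isolate each individual matrix product, which is exactly where the closedness of~$\mathcal{V}$ under integration and the invertibility of the Kronecker power $M^{\otimes n}$ enter.
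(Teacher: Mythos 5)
Your proof is correct, but it takes a genuinely different route from the paper's. Both arguments begin identically: use linear decomposability to extract a basis $\{[\![\beta_1]\!],\dots,[\![\beta_m]\!]\}$ of the profile span and assemble the independent functional decomposition $\Psi=\sum_k [\![\beta_k]\!]P_k$ with rational matrices $P_k$ that need not be non-negative. From there the paper insists on an honest reduction to a finite-observation HMM: it proves $\Span\{\Psi(x)\mid x\in\Sigma\}=\Span\{P_k\}$ via an alternant-matrix argument (\cref{spanpreserving}, \cref{alternantexistence}), repairs non-negativity by passing to the stochastic matrices $\frac{1}{d-\theta}(P-\theta P_k)$ for a suitably small $\theta$ (\cref{reductiontime}), and then invokes \cref{prop-finite-HMM} and \cref{langequiv} as black boxes. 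You instead skip the non-negativity repair entirely, observing that the fixed-point computation of $\Span\{P_{k_1}\cdots P_{k_n}\1^T\}$ is purely linear-algebraic and indifferent to signs, and you prove $\mathcal{V}=\mathcal{W}$ directly; your reverse inclusion replaces the paper's pointwise evaluation at well-chosen input points by integration over well-chosen measurable sets and invertibility of the moment matrix $M$ and its Kronecker powers. Each approach buys something: the paper's yields a bona fide finite-observation HMM (the ``labelling reduction done right'') and delegates all product-word bookkeeping to \cref{langequiv}, whereas your integral-based argument is arguably cleaner in $\mathcal{L}_1(\Sigma,\lambda)$, where pointwise evaluation of equivalence classes is delicate and the paper must lean on piecewise continuity to make the alternant lemma meaningful; note also that your sets $E^{(l)}$ are needed only for the correctness proof, not the algorithm, so their non-constructive existence is harmless. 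Your Kronecker-power computation essentially re-derives by hand what the paper's \cref{langequiv} provides in one line, and your parenthetical claim that the naive labelling reduction fails because of negative entries slightly misstates \cref{finiteredproblem} (it fails because of linear dependencies among distinct profiles), but neither point affects the validity of the argument.
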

We prove \cref{computationequivalencethm} in \cref{starredsec}.
To make the notion of linearly decomposable profile languages more concrete, we give a concrete example in the following subsection.

\subsection{Example: Gaussian, Exponential, and Piecewise Polynomial Functions} \label{sub-profile-example}

We describe a profile language, $\pl$, that can specify linear combinations of Gaussian, exponential, and piecewise polynomial density functions.

We call a function of the form $x \mapsto x^k \chi_{I}(x)$ where $k \in \NN \cup \{0\}$ and $I \subset \RR$ is an interval an \emph{interval-domain monomial}.
To avoid clutter, we often denote interval-domain monomials only by $x^k \chi_I$.
Recall that $\mathcal{L}_1(\RR, \Leb)$ is a quotient space, so half open intervals~$I = [a,b)$ are sufficient. Any piecewise polynomial is a linear combination of interval-domain monomials.

Let $M$ be a set of profiles encoding interval-domain monomials $x^k \chi_{[a,b)}$ in terms of $k \in \NN \cup \{0\}$ and $a,b \in \QQ$. Gaussian and exponential density functions can be fully described using their parameters, which we assume to be rational. We write $G$ and $E$ for corresponding sets of profiles, respectively.
Finally, we fix a profile language $\pl \supset G \cup E \cup M$ obtained by closing $G \cup E \cup M$ under linear combinations. That is, for any $\gamma_1, \dots, \gamma_k \in \pl$ and $\lambda_1, \dots, \lambda_k \in \QQ$, there exists a profile $\gamma \in \pl$ such that $[\![\gamma]\!] = \lambda_1 [\![\gamma_1]\!] + \dots + \lambda_k [\![\gamma_k]\!]$.
This closure can be achieved using a specific constructor, say $\mathcal{S}$, for linear combinations, so that $\gamma = \mathcal{S}(\lambda_1, \gamma_1, \ldots, \lambda_k, \gamma_k)$.

\begin{example}\label{profileexample}
The HMM $(Q, \RR, \Psi)$ from \Cref{finiteredproblem} is over~$\pl$: the observation density matrix~$\Psi$ can be encoded as a matrix of coefficient-profile pairs
\[
\begin{pmatrix}
0 & (\frac12, \gamma_1)  & (\frac12,\gamma_2) & 0 \\
0 & (1,\gamma_3) & 0 & 0 \\
0 & (1,\gamma_3) & 0 & 0 \\
0 & (1,\gamma_4) & 0 & 0
\end{pmatrix}
\]
with $\gamma_1, \gamma_2, \gamma_3, \gamma_4 \in \pl$ and
$[\![\gamma_1]\!] = 2x\chi_{[0,1)}$ and
$[\![\gamma_2]\!] = 2(1-x)\chi_{[0,1)}$ and
$[\![\gamma_3]\!] = \frac12 \chi_{[0,2)}$ and
$[\![\gamma_4]\!] = \chi_{[0,1)}$.
\qed
\end{example}

\begin{restatable}{lemma}{linindepofcommonfuncs}\label{linindepofcommonfuncs}
Let $H$ be a set of disjoint half open intervals.
Suppose that $m_1, \dots, m_I$ are distinct interval-domain monomials such that $\supp~m_i \in H$ for all $i \in [I]$. In addition, let $g_1, \dots, g_J$ and $e_1, \dots, e_K$ be distinct Gaussian and exponential density functions, respectively. Then, the set $\{m_1, \dots, m_I, g_1, \dots, g_J, e_1, \dots, e_K\}$ is linearly independent.
\end{restatable}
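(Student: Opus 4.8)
The plan is to assume a vanishing linear combination
\[
\sum_{i=1}^I a_i m_i + \sum_{j=1}^J b_j g_j + \sum_{k=1}^K c_k e_k = 0 \quad \text{in } \mathcal{L}_1(\RR,\Leb),
\]
i.e.\ holding for almost every $x \in \RR$, and to deduce that all coefficients vanish. The key structural fact I would exploit is that the three families behave very differently ``at the right end'': the interval-domain monomials $m_i = x^{k_i}\chi_{I_i}$ have bounded support (each $I_i \in H$ is a bounded half-open interval with rational endpoints), whereas the Gaussians and exponentials are real-analytic and decay at determined rates as $x \to +\infty$.

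First I would isolate the analytic part by looking far to the right. Let $R$ be larger than the right endpoint of every $I_i$ and at least $0$; then every $m_i$ vanishes on $(R,\infty)$, so $h := \sum_j b_j g_j + \sum_k c_k e_k$ vanishes almost everywhere on $(R,\infty)$. Since each $e_k(x) = \lambda_k \exp(-\lambda_k x)$ agrees with its analytic branch on $(0,\infty)$ and Gaussians are entire, $h$ is continuous on $(0,\infty)$; a continuous function that is zero almost everywhere on an interval is identically zero there, and analytic continuation upgrades this to $h \equiv 0$ on all of $(0,\infty)$.

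Next I would run a tail-comparison argument on $(0,\infty)$ to kill the analytic terms one family at a time. Every Gaussian is dominated by every exponential, since $g_j(x)/e_k(x) \to 0$ as $x \to \infty$, and distinct exponential densities have distinct rates $\lambda_k$. Hence, if some $c_k \neq 0$, dividing $h$ by the exponential of smallest rate and letting $x \to \infty$ forces its coefficient to $0$, a contradiction; so all $c_k = 0$. We are left with $\sum_j b_j g_j \equiv 0$. Distinct Gaussians have distinct pairs $(\mu_j,\sigma_j)$, and $\log g_j(x) = -x^2/(2\sigma_j^2) + \mu_j x/\sigma_j^2 + \mathrm{const}$; ordering first by largest $\sigma_j$ and then by largest $\mu_j$ yields a unique slowest-decaying Gaussian, and the same division-and-limit argument forces all $b_j = 0$.

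Finally, only $\sum_i a_i m_i = 0$ remains. For a fixed $I \in H$, disjointness of the intervals means that for $x \in I$ the identity reduces to $\sum_{i:\,I_i = I} a_i x^{k_i} = 0$; since distinct monomials sharing the support $I$ must have distinct exponents $k_i$, this is a polynomial with distinct powers vanishing on an interval, forcing those $a_i = 0$, and ranging over all $I \in H$ finishes the argument. The main obstacle is the tail-comparison step: I must pin down the precise asymptotic ordering of the analytic densities at $+\infty$, in particular that every Gaussian is dominated by every exponential and that within each family distinctness of parameters gives a strict decay ordering (for Gaussians this is the lexicographic order on $(\sigma_j,\mu_j)$). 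Care is also needed to upgrade ``zero in $\mathcal{L}_1$'' to genuine pointwise identities via continuity and analyticity before any limits are taken.
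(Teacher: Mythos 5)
Your proposal is correct and follows essentially the same route as the paper: eliminate the exponentials and then the Gaussians by dividing by the slowest-decaying term (ordered by rate, respectively lexicographically by $(\sigma,\mu)$) and letting $x\to\infty$, then handle the interval-domain monomials interval by interval. The only cosmetic differences are that for the last step the paper invokes its alternant-matrix lemma via a Vandermonde determinant where you argue directly that a polynomial with distinct powers vanishing on an interval is identically zero, and that your explicit care with the almost-everywhere versus pointwise issue is an addition the paper's proof glosses over; both versions are fine.
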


For the proof of this lemma we need a result concerning \emph{alternant matrices}. Consider functions $f_1, \dots, f_n : \Sigma \rightarrow \RR$ and let $x_1, \dots, x_n \in \Sigma$. Then,
\[M = \begin{pmatrix}
f_1(x_1) &  f_2(x_1) &  \cdots  & f_n(x_1) \\
f_1(x_2) & f_2(x_2) & \cdots  & f_n(x_2) \\
\vdots & \vdots & \ddots & \vdots \\
f_1(x_n) & f_2(x_n) & \dots  & f_n(x_n)
\end{pmatrix}\]
is called the alternant matrix for $f_1, \dots, f_n$ and \emph{input points} $x_1, \dots, x_n$.

\begin{restatable}{lemma}{alternantexistence}\label{alternantexistence}
Suppose $f_1, \dots, f_n \in \mathcal{L}_1(\Sigma, \lambda)$.
Then, the $f_i$ are linearly dependent if and only if all alternant matrices for the $f_i$ are singular.
\end{restatable}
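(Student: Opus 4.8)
The plan is to prove the equivalent reformulation: the $f_i$ are linearly \emph{independent} if and only if at least one alternant matrix for them is nonsingular. The routine direction is that linear dependence forces every alternant matrix to be singular; the substantive direction is that independence produces a witnessing nonsingular alternant matrix, which I would obtain by induction on $n$ using a ``determinant-as-a-function'' argument.

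For the direction ``dependent $\Rightarrow$ all alternant matrices singular'', suppose $\sum_{i=1}^n c_i f_i = 0$ in $\mathcal{L}_1(\Sigma,\lambda)$ for some nonzero coefficient vector $c = (c_1,\dots,c_n)$. For any input points $x_1,\dots,x_n$, the column vector $c^T$ lies in the kernel of the alternant matrix $M$, since in each row $j$ one has $\sum_i M_{j,i}\,c_i = \sum_i c_i f_i(x_j) = 0$; hence $M$ is singular. The single subtlety is that the relation $\sum_i c_i f_i = 0$ holds only $\lambda$-almost everywhere, whereas the entries $f_i(x_j)$ are genuine point values. I would handle this by working with (piecewise-)continuous representatives and restricting evaluation to the open set of continuity $C$: a function that is continuous on $C$ and vanishes $\lambda$-almost everywhere must vanish on all of $C$, because otherwise its nonzero set would contain a neighbourhood of positive measure. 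Thus the almost-everywhere relation becomes an everywhere-on-$C$ relation, valid at the evaluation points.

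For ``independent $\Rightarrow$ some alternant nonsingular'', I would induct on $n$. The base case $n=1$ is immediate: $f_1 \neq 0$ in $\mathcal{L}_1$ means $\{x : f_1(x)\neq 0\}$ has positive measure, so some $x_1$ gives $f_1(x_1)\neq 0$. For the step, $f_1,\dots,f_{n-1}$ are independent (a subset of an independent set), so by the inductive hypothesis there are points $x_1,\dots,x_{n-1}$ whose $(n-1)\times(n-1)$ alternant is nonsingular. I then consider the function
\[
g(x) \;=\; \det \begin{pmatrix} f_1(x_1) & \cdots & f_n(x_1) \\ \vdots & & \vdots \\ f_1(x_{n-1}) & \cdots & f_n(x_{n-1}) \\ f_1(x) & \cdots & f_n(x) \end{pmatrix}
\]
and expand along the last row. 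This exhibits $g$ as a linear combination $\sum_i \alpha_i f_i$ whose coefficient $\alpha_n$ is, up to sign, the nonzero $(n-1)$-alternant determinant. Since the $f_i$ are linearly independent and $\alpha_n \neq 0$, $g$ is not the zero element of $\mathcal{L}_1$, so $\{x : g(x)\neq 0\}$ has positive measure and meets $C$; picking such an $x_n \in C$ gives $g(x_n)\neq 0$. But $g(x_n)$ is exactly the determinant of the $n\times n$ alternant matrix for $x_1,\dots,x_n$, which is therefore nonsingular, closing the induction.

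The main obstacle is the clash between the pointwise definition of alternant matrices and the almost-everywhere identification built into $\mathcal{L}_1$: point evaluation is not well-defined on $\mathcal{L}_1$-classes, so one must fix representatives and ensure that both the almost-everywhere linear relation (forward direction) and the positive-measure nonvanishing of $g$ (inductive direction) translate into statements at genuine evaluation points. Restricting to the set of continuity of (piecewise-)continuous representatives is the device that makes both translations go through, and it is the only place where the measure-theoretic hypotheses on $(\Sigma,\GG,\lambda)$ are really used.
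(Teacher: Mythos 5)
Your proof is correct and follows essentially the same route as the paper's: both arguments turn on an induction over $n$ whose engine is the Laplace (cofactor) expansion of the alternant determinant, with yours merely run in the contrapositive direction (independence $\Rightarrow$ some nonsingular alternant exists), selecting the input points incrementally instead of case-splitting on whether all cofactors vanish identically. Your explicit handling of the $\mathcal{L}_1$ point-evaluation subtlety via piecewise-continuous representatives and the set of continuity is a detail the paper's proof silently elides, and is a welcome addition rather than a divergence.
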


\begin{proof}[Sketch proof of \Cref{linindepofcommonfuncs}]
Under the assumption that a linear combination exists almost surely equal to $0$, by examining the limit at $+\infty$ we show that the exponential and Gaussian coefficients are zero. Then, by constructing an appropriate alternant matrix with full rank we invoke \Cref{alternantexistence} which means the remaining interval-domain monomials are linearly independent and thus must also have zero coefficients.
\end{proof}

\begin{restatable}{proposition}{commonfuncslineardecomp}\label{commonfuncslineardecomp}
The profile language $\pl$ is linearly decomposable.
\end{restatable}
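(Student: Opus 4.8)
The plan is to reduce the decomposition of arbitrary profiles in $\pl$ to linear algebra over $\QQ$ on coordinate vectors, using \Cref{linindepofcommonfuncs} to guarantee that this reduction is faithful. Given a finite set $\Gamma_0 = \{\gamma_1, \dots, \gamma_n\} \subseteq \pl$, first I would unfold each profile into its basic constituents: by construction of $\pl$ via the constructor $\mathcal{S}$, each $[\![\gamma_i]\!]$ is a rational linear combination of Gaussian densities, exponential densities, and interval-domain monomials $x^k \chi_{[a,b)}$. I would collect, across all $i$, the distinct Gaussians $g_1, \dots, g_J$, the distinct exponentials $e_1, \dots, e_K$, and all the interval-domain monomials that appear.

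The main work is to bring the interval-domain monomials into the form required by \Cref{linindepofcommonfuncs}, namely monomials whose supports lie in a common family of disjoint half-open intervals. To do this I would gather all interval endpoints occurring in the monomials, sort them into $c_0 < c_1 < \dots < c_L$, and form the disjoint half-open intervals $[c_{\ell-1}, c_\ell)$. Every original interval $[a,b)$ is a disjoint union of such refined intervals, so each monomial $x^k \chi_{[a,b)}$ rewrites as $\sum_\ell x^k \chi_{[c_{\ell-1}, c_\ell)}$ over the refined subintervals it covers. Letting $d$ be the largest power appearing, the refined building blocks are the monomials $x^k \chi_{[c_{\ell-1}, c_\ell)}$ for $0 \le k \le d$ and $1 \le \ell \le L$. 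Together with $g_1, \dots, g_J, e_1, \dots, e_K$ this forms a finite set $\mathcal{E}$ of functions, and by \Cref{linindepofcommonfuncs} (the monomial supports lie in the disjoint family $\{[c_{\ell-1},c_\ell)\}$, the Gaussians are distinct, and the exponentials are distinct) the set $\mathcal{E}$ is \emph{linearly independent}.

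With $\mathcal{E}$ in hand, each $[\![\gamma_i]\!]$ has a coordinate vector $v_i \in \QQ^{|\mathcal{E}|}$ expressing it as a rational combination of the elements of $\mathcal{E}$, read off directly from the unfolding above. Because $\mathcal{E}$ is linearly independent, a rational combination $\sum_i \lambda_i [\![\gamma_i]\!]$ vanishes in $\mathcal{L}_1(\RR,\Leb)$ if and only if $\sum_i \lambda_i v_i = 0$ in $\QQ^{|\mathcal{E}|}$; that is, linear dependence of the functions is exactly linear dependence of the coordinate vectors. The problem now becomes purely combinatorial: run Gaussian elimination on $v_1, \dots, v_n$ to select a maximal linearly independent subset $v_{i_1}, \dots, v_{i_m}$ and to express every $v_i$ as a rational combination $\sum_j b_{i,j} v_{i_j}$. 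Setting $\beta_j = \gamma_{i_j}$, the faithfulness above gives that $\{[\![\beta_1]\!], \dots, [\![\beta_m]\!]\}$ is a basis of $\Span\{[\![\gamma_1]\!], \dots, [\![\gamma_n]\!]\}$ and that $[\![\gamma_i]\!] = \sum_j b_{i,j}[\![\beta_j]\!]$, exactly as the definition of linear decomposability demands, with $\beta_j \in \Gamma_0$ and $m \le n$.

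For the complexity bound I would verify that every stage is polynomial: the numbers of distinct Gaussians, exponentials, endpoints $L$, and the maximal degree $d$ are each at most linear in the encoding size, so $|\mathcal{E}| = J + K + (d+1)L$ is polynomial; the coordinate vectors have polynomially many entries; and Gaussian elimination over $\QQ$ runs in polynomial time with polynomially bounded bit-length of the rationals. The one point demanding care — and the crux of the argument — is that the reduction to rational linear algebra is \emph{sound}, which rests entirely on the linear independence of $\mathcal{E}$ furnished by \Cref{linindepofcommonfuncs}; the refinement of overlapping intervals into a disjoint half-open family is precisely the manoeuvre that lets that lemma apply.
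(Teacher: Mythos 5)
Your proposal is correct and follows essentially the same route as the paper's proof: refine the interval endpoints into a family of disjoint half-open intervals, invoke \Cref{linindepofcommonfuncs} to obtain a linearly independent reference set of Gaussians, exponentials, and refined interval-domain monomials, coordinatize each $[\![\gamma_i]\!]$ over $\QQ$ with respect to that set, and then solve the basis-selection and coefficient problems by rational linear algebra in polynomial time. The only cosmetic difference is that the paper selects the basis $\mathcal{B} \subseteq \Gamma_0$ by a greedy pass over the profiles while you use Gaussian elimination on the coordinate vectors; these are the same computation.
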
 

Thus we obtain the following corollary of \cref{computationequivalencethm}:
\begin{corollary}
Given a HMM $(Q, \Sigma, \Psi)$ over~$\pl$, and initial distributions $\pi_1, \pi_2 \in \QQ^{|Q|}$, it is decidable in polynomial time whether $\pi_1 \equiv \pi_2$.
\end{corollary}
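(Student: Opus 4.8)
The plan is to obtain this corollary as an immediate instance of \cref{computationequivalencethm}. That theorem already decides equivalence in polynomial time for any HMM over an \emph{arbitrary} linearly decomposable profile language and any rational initial distributions $\pi_1,\pi_2$; the corollary is nothing more than the special case in which this language is taken to be $\pl$. Consequently the only thing I need to verify is that $\pl$ satisfies the hypothesis of the theorem, namely that it is linearly decomposable, and that a HMM over $\pl$ presents its input in exactly the format the theorem expects.

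The first point is supplied directly by \cref{commonfuncslineardecomp}, which asserts that $\pl$ is linearly decomposable. For the second, recall from the preliminaries that a HMM $(Q,\Sigma,\Psi)$ \emph{over}~$\pl$ comes with its observation density matrix encoded as a matrix of pairs $(p_{i,j},\gamma_{i,j}) \in \QQ_+ \times \pl$ satisfying $\Psi_{i,j} = p_{i,j}[\![\gamma_{i,j}]\!]$ and $\int_\Sigma [\![\gamma_{i,j}]\!]\, d\lambda = 1$ for all $i,j \in [Q]$. This is precisely the input shape required by \cref{computationequivalencethm} with the linearly decomposable language instantiated to~$\pl$. Feeding the given data $(Q,\Sigma,\Psi)$, $\pi_1$, $\pi_2$ into the algorithm guaranteed by that theorem therefore decides whether $\pi_1 \equiv \pi_2$, and does so in time polynomial in the size of the encoding, since specialising the ambient language to $\pl$ does not enlarge the input.

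There is essentially no obstacle at the level of the corollary itself: all of the difficulty has already been discharged upstream. The substantive work sits in the proof of \cref{computationequivalencethm}, which must reduce the continuous-observation problem to the finite-observation case of \cref{prop-finite-HMM} while accounting for the additional linear dependencies between density functions exhibited in \cref{finitelabred}, and in the proof of \cref{commonfuncslineardecomp}, which in turn relies on the linear-independence statement of \cref{linindepofcommonfuncs} and the alternant-matrix criterion of \cref{alternantexistence}. Once these are in place, the corollary follows by a one-line specialisation, and I would expect to present it as such.
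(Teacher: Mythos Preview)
Your proposal is correct and matches the paper's approach exactly: the paper states the corollary immediately after \cref{commonfuncslineardecomp} with no proof, simply noting it as a consequence of \cref{computationequivalencethm}. Your explanation that it follows by specialising the theorem to the linearly decomposable language~$\pl$ is precisely the intended argument.
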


\section{Proof of \Cref{computationequivalencethm}}\label{starredsec}

Suppose that $\Psi$ has a functional decomposition $\sum_{k = 1}^d f_k P_k$ such that the set $\{f_1, \dots, f_d\}$ is linearly independent. Then, $\sum_{k = 1}^d f_k P_k$ is called an \emph{independent functional decomposition}.
The efficient computation of an independent functional decomposition is the key ingredient for the proof of \cref{computationequivalencethm}. 
We start with the following lemma.

\begin{restatable}{lemma}{spanpreserving}\label{spanpreserving}
Suppose $\Psi : \Sigma \rightarrow [0,\infty)^{|Q| \times |Q|}$ has an independent functional decomposition $\Psi = \sum_{k = 1}^d f_k P_k$. Then, $\Span \{\Psi(x) \mid x \in \Sigma\} = \Span \{P_k \mid k \in [d]\}.$
\end{restatable}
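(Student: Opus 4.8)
The plan is to prove the two inclusions separately. The inclusion $\Span \{\Psi(x) \mid x \in \Sigma\} \subseteq \Span \{P_k \mid k \in [d]\}$ is immediate: for every $x \in \Sigma$ the identity $\Psi(x) = \sum_{k = 1}^d f_k(x) P_k$ exhibits $\Psi(x)$ as a linear combination of the matrices $P_1, \dots, P_d$, so every generator of the left-hand span already lies in the right-hand span.

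For the reverse inclusion $\Span \{P_k \mid k \in [d]\} \subseteq \Span \{\Psi(x) \mid x \in \Sigma\}$, I would exploit the independence hypothesis together with \cref{alternantexistence}. Since $\{f_1, \dots, f_d\}$ is linearly independent in $\mathcal{L}_1(\Sigma, \lambda)$, the contrapositive of \cref{alternantexistence} guarantees input points $x_1, \dots, x_d \in \Sigma$ for which the alternant matrix $A \in \RR^{d \times d}$ with entries $A_{i,k} = f_k(x_i)$ is non-singular. Evaluating the functional decomposition at each of these points gives $\Psi(x_i) = \sum_{k = 1}^d A_{i,k} P_k$ for all $i \in [d]$.

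The final step is to invert this relation. Viewing the $\Psi(x_i)$ and the $P_k$ as elements of the vector space $\RR^{|Q| \times |Q|}$, the $d$ equations above say that the tuple $(\Psi(x_1), \dots, \Psi(x_d))$ is obtained from $(P_1, \dots, P_d)$ by the action of the scalar matrix~$A$. Because $A$ is invertible, I can solve for each matrix $P_k$ as $P_k = \sum_{i = 1}^d (A^{-1})_{k,i}\, \Psi(x_i)$, so that $P_k \in \Span \{\Psi(x_i) \mid i \in [d]\} \subseteq \Span \{\Psi(x) \mid x \in \Sigma\}$. Combining the two inclusions yields the claimed equality.

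The only delicate point is the passage from linear independence of the $f_k$ as elements of the quotient space $\mathcal{L}_1(\Sigma, \lambda)$ to the existence of genuine evaluation points $x_1, \dots, x_d$ making the alternant non-singular; this is precisely what \cref{alternantexistence} is designed to supply. Once those points are in hand, the remainder is routine finite-dimensional linear algebra over the matrix space $\RR^{|Q| \times |Q|}$, so I expect no further obstacle.
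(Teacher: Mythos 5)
Your proposal is correct and follows essentially the same route as the paper: the forward inclusion is immediate from the decomposition, and the reverse inclusion uses \cref{alternantexistence} to obtain input points with a non-singular alternant matrix and then inverts the resulting linear system to express each $P_k$ as a linear combination of the $\Psi(x_i)$. The paper phrases the inversion via vectors $v_k$ with $v_k M = e_k$ rather than writing $A^{-1}$ explicitly, but this is the same argument.
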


\begin{proof}
Since $\Psi(x) = \sum_{k = 1}^d f_k(x) P_k$, we have $\Span \{\Psi(x) \mid x \in \Sigma\} \subseteq \Span \{P_k \mid k \in [d]\}$. For the reverse inclusion, since the $f_i$ are linearly independent, by \Cref{alternantexistence} there exists an alternant matrix $M$ with full rank for $f_1, \dots, f_d$ with input points $x_1, \dots, x_d$.
Hence, for each of the standard basis vectors $e_k \in \{0,1\}^d$, $k \in [d]$, there exists $v_k = (v_{1,k} ,\dots, v_{d,k}) \in \RR^d$ such that $v_k M = e_k$. Writing $\delta_{j,k}$ for the Kronecker delta function it follows that
\begin{equation*}
\begin{split}
\sum_{i = 1}^d v_{i,k} \Psi(x_i) & \ = \ \sum_{i = 1}^d v_{i,k} \sum_{j = 1}^d f_j(x_i) P_j
 \ = \ \sum_{j = 1}^d P_j \sum_{i = 1}^d v_{i,k} f_j(x_i)
 \ = \ \sum_{j = 1}^d P_j \delta_{j,k}
 \ = \ P_k\,,
\end{split}
\end{equation*}
which implies that $\Span \{\Psi(x) \mid x \in \Sigma\} \supseteq \Span \{P_k \mid k \in [d]\}.$
\end{proof}

The proof of the following proposition re-uses \cref{langequiv} from \cref{finitelabred}.
\begin{proposition}\label{nonnegreduction}
Suppose that HMM $C = (Q, \Sigma, \Psi)$ has independent functional decomposition $\Psi = \sum_{k = 1}^d f_k P_k$ and each  $P_k$ is non-negative for all $k \in [d]$. Define a set $\overline{\Sigma} = \{ a_1, \dots, a_d \}$ of fresh observations and the observation density matrix~$M$ with
$M(a_k) = P_k$ for all $k \in [d]$. Then $F = (Q, \overline{\Sigma}, M)$ is a finite-observation HMM and for any initial distributions $\pi_1, \pi_2$
\[\pi_1 \equiv_C \pi_2 \iff \pi_1 \equiv_F \pi_2. \]
\end{proposition}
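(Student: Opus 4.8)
The plan is to show that $C$ and $F$ induce exactly the same span of single-observation densities, and then to invoke \Cref{langequiv} in both directions.

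First I would check that $F$ is a genuine finite-observation HMM. By \Cref{stochasticPk} the matrix $\sum_{k=1}^d P_k$ is stochastic, and this sum is precisely $\sum_{a \in \overline{\Sigma}} M(a)$. The hypothesis that each $P_k$ is non-negative is exactly what guarantees that each $M(a_k) = P_k$ is a legitimate (entrywise non-negative) observation density matrix; this is the only place in the argument where the non-negativity assumption is actually used.

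The heart of the proof is a span identity. Since $\Psi = \sum_{k=1}^d f_k P_k$ is an \emph{independent} functional decomposition, \Cref{spanpreserving} gives $\Span\{\Psi(x) \mid x \in \Sigma\} = \Span\{P_k \mid k \in [d]\}$. By construction $\{M(a) \mid a \in \overline{\Sigma}\} = \{P_k \mid k \in [d]\}$, so the two single-observation spans coincide:
\[
\Span\{\Psi(x) \mid x \in \Sigma\} = \Span\{M(a) \mid a \in \overline{\Sigma}\}.
\]

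Finally I would apply \Cref{langequiv} twice. The equality above yields containment in both directions, so taking $(C_1,C_2) = (C,F)$ gives $\pi_1 \equiv_F \pi_2 \implies \pi_1 \equiv_C \pi_2$, while taking $(C_1,C_2) = (F,C)$ gives the reverse implication $\pi_1 \equiv_C \pi_2 \implies \pi_1 \equiv_F \pi_2$, and combining the two yields the claimed biconditional. There is no genuine obstacle once \Cref{spanpreserving} and \Cref{langequiv} are in hand; the only subtlety worth flagging is that the continuous chain $C$ and the finite chain $F$ live over different observation alphabets, so one must rely on \Cref{langequiv} precisely because it permits $\Sigma_1 \neq \Sigma_2$, with the span identity supplied by the independence of the functional decomposition serving as the bridge between the two alphabets.
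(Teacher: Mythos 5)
Your proposal is correct and follows essentially the same route as the paper: verify that $F$ is a well-defined HMM via \Cref{stochasticPk} together with the non-negativity hypothesis, obtain the span identity from \Cref{spanpreserving}, and conclude by applying \Cref{langequiv} in both directions. Your explicit remark that the two-sided containment yields both implications is merely a slightly more spelled-out version of the paper's one-line conclusion.
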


\begin{proof}
It follows by \Cref{stochasticPk} that $\sum_{k = 1}^d P_k$ is stochastic. Thus $F$ defines a HMM. By \Cref{spanpreserving},  $\Span \{\Psi(x)\1^T \mid x \in \Sigma\} = \Span \{M(a)\1^T \mid a \in \overline{\Sigma}\}$ which combined with \Cref{langequiv} gives the result.
\end{proof}

\begin{example}\label{indfuncdecompex}
We use the HMM~$C$ discussed in \Cref{finiteredproblem,profileexample} to illustrate the construction of \Cref{nonnegreduction}.
The basis $\{2x\chi_{[0,1)}, 2(1-x)\chi_{[0,1)}, \frac12\chi_{[0,2)}\}$ leads to the independent functional decomposition
\begin{multline*}
\Psi = 2x\chi_{[0,1)}\begin{pmatrix}
0 & \frac12  & 0 & 0 \\
0 & 0 & 0 & 0 \\
0 & 0 & 0 & 0 \\
0 & \frac12 & 0 & 0
\end{pmatrix} + 2(1 - x)\chi_{[0,1)}\begin{pmatrix}
0 & 0  & \frac12 & 0 \\
0 & 0 & 0 & 0 \\
0 & 0 & 0 & 0 \\
0 & \frac12 & 0 & 0
\end{pmatrix} + \frac12 \chi_{[0,2)}\begin{pmatrix}
0 & 0  & 0 & 0 \\
0 & 1 & 0 & 0 \\
0 & 1 & 0 & 0 \\
0 & 0 & 0 & 0
\end{pmatrix}.
\end{multline*}
Therefore, \Cref{nonnegreduction} implies that two initial distributions $\pi_1, \pi_2 \in \RR^{|Q|}$ are equivalent in~$C$ if and only if they are equivalent in the following HMM:
\begin{center}
\begin{tikzpicture}[scale=2.5,LMC style]
\node[state] (q1) at (0,0) {$q_1$};
\node[state] (q2) at (1.2,0.35) {$q_2$};
\node[state] (q3) at (1.2,-0.35) {$q_3$};
\node[state] (q4) at (2.4,0) {$q_4$};
\path[->] (q2) edge [loop,out=110,in=70,looseness=10] node[pos=0.75,right] {$1 (c)$} (q2);
\path[->] (q4) edge node[pos=0.4,above,yshift=1mm] {$1(\frac12 a + \frac12 b)$} (q2);
\path[->] (q3) edge node[right,pos=0.45] {$1 (c)$} (q2);
\path[->] (q1) edge node[above] {$\frac{1}{2} (a)$} (q2);
\path[->] (q1) edge node[above] {$\frac{1}{2} (b)$} (q3);
\end{tikzpicture}
\end{center}
Here, states $q_1$ and~$q_4$ are equivalent.
Hence, they are also equivalent in~$C$.
\qed
\end{example}

If an observation density matrix has an entry with pdf $2e^{-x} - 2e^{-2x}$ (which is encodable in $\pl$ due to its convex closure property), the independent functional decomposition generated by the algorithm described in the proof of \Cref{commonfuncslineardecomp} in the appendix has matrices which are not all non-negative. Therefore, \Cref{nonnegreduction} cannot be applied directly. However, given an independent functional decomposition $\Psi = \sum_{k = 1}^d f_k P_k$ and noting that $\sum_{k = 1}^d P_k$ is stochastic by \Cref{stochasticPk}, the following proposition shows that there is a small $\theta > 0$ such that $P - \theta P_k$ is non-negative for all $k \in [d]$. Furthermore, $\Span \{P_k \mid k \in [d] \} = \Span \{ P - \theta P_k \mid k \in [d] \}$. These two facts lead us to construct a finite-observation HMM using the scaled transition matrices $\frac{1}{d - \theta} (P - \theta P_k)$.
\begin{proposition}\label{reductiontime}
Let $C = (Q,\Sigma,\Psi)$ be a HMM with independent functional decomposition $\Psi = \sum_{k = 1}^d f_k P_k$. Let $P = \sum_{k = 1}^d P_k$ and
\[\theta = \min \left\{ \frac12, \frac{\min \{(P)_{i,j} \mid (P)_{i,j} > 0\}}{\max \{ \big(P_k\big)_{i,j} \mid i,j \in [Q],\ k \in [d] \} } \right\}\,. \]
Define an alphabet $\tilde{\Sigma} = \{a_1, \dots, a_d\}$ of fresh observations and the HMM $F = (Q, \tilde{\Sigma}, M)$ with $M(a_k) = \frac{1}{d - \theta}(P - \theta P_k)$. Then, for any initial distributions $\mu_1, \mu_2$
\[\mu_1 \equiv_F \mu_2 \iff \mu_1 \equiv_C \mu_2.\]
\end{proposition}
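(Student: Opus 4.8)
The plan is to show that $F$ is a well-defined finite-observation HMM whose transition matrices span the same space as the $P_k$, and then to invoke \Cref{langequiv} in both directions. First I would check that $\theta$ is well defined and strictly positive. Since $P = \sum_{k=1}^d P_k$ is stochastic by \Cref{stochasticPk}, each of its rows contains a positive entry, so $\{(P)_{i,j} \mid (P)_{i,j} > 0\}$ is nonempty with positive minimum; stochasticity (non-negativity together with unit row sums) also forbids all entries of all $P_k$ from being non-positive, so the denominator $\max\{(P_k)_{i,j} \mid i,j \in [Q],\ k \in [d]\}$ is positive as well. Hence $0 < \theta \le \tfrac12$ and $d - \theta \ge \tfrac12 > 0$, so each $M(a_k) = \tfrac{1}{d-\theta}(P - \theta P_k)$ is well defined.

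The key step is non-negativity of each $M(a_k)$, i.e.\ $P - \theta P_k \ge 0$ entrywise. Fixing $i,j,k$, the case $(P_k)_{i,j} \le 0$ is immediate from $P \ge 0$. In the case $(P_k)_{i,j} > 0$ I would first argue $(P)_{i,j} > 0$: since $\Psi_{i,j} = \sum_{l} f_l (P_l)_{i,j} \ge 0$ everywhere and $\int_\Sigma \Psi_{i,j}\,d\lambda = (P)_{i,j}$ (using $\int_\Sigma f_l\,d\lambda = 1$), if $(P)_{i,j}$ were $0$ then $\Psi_{i,j} = 0$ $\lambda$-almost everywhere, and linear independence of $\{f_1,\dots,f_d\}$ would force every $(P_l)_{i,j} = 0$, contradicting $(P_k)_{i,j} > 0$. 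Thus $(P)_{i,j} \ge \min\{(P)_{i,j} \mid (P)_{i,j} > 0\}$ while $(P_k)_{i,j} \le \max\{(P_l)_{i,j}\}$, and the definition of $\theta$ gives $\theta (P_k)_{i,j} \le (P)_{i,j}$. I expect this implication $(P)_{i,j} = 0 \Rightarrow (P_k)_{i,j} = 0$ to be the crux of the proposition, since it is exactly the point where non-negativity of $\Psi$ and independence of the decomposition are combined; everything else is bookkeeping. Stochasticity of $F$ is then routine, as $\sum_{k=1}^d M(a_k) = \tfrac{1}{d-\theta}(dP - \theta P) = P$, which is stochastic.

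Finally I would establish $\Span\{\Psi(x) \mid x \in \Sigma\} = \Span\{M(a_k) \mid k \in [d]\}$. By \Cref{spanpreserving} the left-hand side equals $\Span\{P_k \mid k \in [d]\}$, so it suffices to show $\Span\{P_k \mid k\} = \Span\{P - \theta P_k \mid k\}$, the factor $\tfrac{1}{d-\theta}$ being irrelevant. One inclusion is clear since each $P - \theta P_k$ lies in $\Span\{P_l\}$; for the reverse, $\sum_{l}(P - \theta P_l) = (d-\theta)P$ places $P$ in $\Span\{P - \theta P_l\}$, whence $P_k = \tfrac1\theta\big(P - (P - \theta P_k)\big)$ recovers each $P_k$ using $\theta > 0$. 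With the two spans equal, \Cref{langequiv} applied with $(C_1,C_2) = (C,F)$ yields $\mu_1 \equiv_F \mu_2 \Rightarrow \mu_1 \equiv_C \mu_2$, and applied with $(C_1,C_2) = (F,C)$ yields the converse, establishing $\mu_1 \equiv_F \mu_2 \iff \mu_1 \equiv_C \mu_2$.
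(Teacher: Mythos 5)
Your proposal is correct and follows essentially the same route as the paper's proof: verify stochasticity via $\sum_k M(a_k)=P$, establish non-negativity of $P-\theta P_k$ by using linear independence of the $f_l$ to show $(P)_{i,j}=0$ forces $(P_l)_{i,j}=0$ for all $l$, prove the span equality via $P\in\Span\{P-\theta P_k\}$, and conclude with \Cref{spanpreserving} and \Cref{langequiv} in both directions. The only differences are cosmetic (your case split on the sign of $(P_k)_{i,j}$ rather than on whether $(P)_{i,j}=0$, and your explicit check that $\theta>0$, which the paper leaves implicit).
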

\begin{proof}
First we show that $F$ is a well-defined HMM. Matrix $\sum_{k=1}^d M(a_k)$ is stochastic as
\begin{equation}\label{sumofps}
\sum_{k=1}^d M(a_k) \ = \ \frac{1}{d - \theta}\sum_{k = 1}^d (P - \theta P_k) \ = \ \frac{dP - \theta \sum_{k=1}^d P_k}{d - \theta} \ = \ P\,,
\end{equation}
and by \Cref{stochasticPk}, $P$ is stochastic. In addition we must show that $M(a_k)$ is non-negative for each $k \in [d]$.
Since $\theta \le \frac12$, it is enough to show that $P - \theta P_k$ is non-negative for each $k \in [d]$. Suppose that $(P)_{i,j} = 0$.
Then, $\int_\Sigma \Psi_{i,j}\, d\lambda = (P)_{i,j} = 0$, which implies that $\Psi_{i,j} = 0$ since $\Psi$ is piecewise continuous.
Thus, $\sum_{k = 1}^d f_k (P_k)_{i,j} = \Psi_{i,j} = 0$.
Since $\{f_k\}_{k=1}^d$ is linearly independent, it follows that $(P_k)_{i,j} = 0$ for all $k \in [d]$ and so $(P - \theta P_k)_{i,j} = 0$.
Now suppose that $(P)_{i,j} > 0$. By the definition of $\theta$, it follows that $(\theta P_k)_{i,j} \leq (P)_{i,j}$. 
Thus, $F$ is a well defined HMM.

Observe that $\Span \left\{ P - \theta P_k \mid k \in [d] \right\} \subseteq \Span \{P_k \mid k \in [d]\}$.
The opposite inclusion follows from the fact that, by \Cref{sumofps}, we have $P \in \Span \left\{ P - \theta P_k \mid k = 1, \dots, d \right\}$.
Thus, by \Cref{spanpreserving},
\[\Span\{M(a) \mid a \in \tilde{\Sigma}\} = \Span\{P - \theta P_k \mid k \in [d]\} = \Span\{P_k  \mid k \in [d]\} = \Span\{\Psi(x) \mid x \in \Sigma\}\,.\]
Hence, the proposition follows from \Cref{langequiv}.
\end{proof}

Now we can prove \Cref{computationequivalencethm}:

\begin{proof}[Proof of \Cref{computationequivalencethm}]
Suppose the HMM $C=(Q, \Sigma, \Psi)$ is over the linearly decomposable profile language~$\Gamma$.
Let $\Gamma_0 = \{\gamma_1, \ldots, \gamma_n\}$ be the set of profiles appearing in the description of~$\Psi$.
From the description of~$\Psi$ as a matrix of coefficient-profile pairs, we can easily compute matrices $P'_1, \dots, P'_n \in \QQ^{|Q| \times |Q|}$ such that $\Psi = \sum_{i=1}^n [\![\gamma_i]\!] P'_i$.
Since $\Gamma$ is linearly decomposable, one can compute in polynomial time a subset $\{\beta_1, \dots, \beta_d\} \subseteq \Gamma_0$ such that $[\![\{\beta_1, \dots, \beta_d\}]\!]$ is linearly independent and also a set of coefficients $b_{i,k}$ such that $[\![\gamma_i]\!] = \sum_{k = 1}^d b_{i,k} [\![\beta_k]\!]$ for all $i \in [n]$.
Hence:
\[
\Psi
\ = \ \sum_{i=1}^n [\![\gamma_i]\!] P'_i
\ = \ \sum_{i=1}^n \sum_{k=1}^d [\![\beta_k]\!] b_{i,k} P'_i
\ = \ \sum_{k=1}^d [\![\beta_k]\!] \sum_{i=1}^n b_{i,k} P'_i
\]
Setting $P_k = \sum_{i=1}^n b_{i,k} P'_i$ for all $k \in [d]$, we thus obtain the independent functional decomposition $\Psi = \sum_{k=1}^d [\![\beta_k]\!] P_k$.
Now it is straightforward to compute the finite-observation HMM~$F$ from \Cref{reductiontime} in polynomial time, thus reducing the equivalence problem in~$C$ to the equivalence problem in the finite-observation HMM~$F$.
By \Cref{prop-finite-HMM} the theorem follows.
\end{proof}

\begin{example}
We illustrate aspects of the proof of \Cref{computationequivalencethm} using the HMM:
\begin{center}
\begin{tikzpicture}[scale=2.5,LMC style]
\node[state] (q1) at (0,0) {$q_1$};
\node[state] (q2) at (+1,0) {$q_2$};
\path[->] (q1) edge [loop,out=160,in=200,looseness=10] node[left] {$\frac{1}{2} (\frac{1}{2}\chi_{[0,2)})$}(q1);
\path[->] (q1) edge [bend left] node[above] {$\frac{1}{2} (\frac{1}{2}\chi_{[1,3)})$} (q2);
\path[->] (q2) edge [bend left] node[below] {$\frac{1}{2} (\frac{1}{2}\chi_{[2,4)})$} (q1);
\path[->] (q2) edge [loop,out=20,in=340,looseness=10] node[right] {$\frac{1}{2} (\frac{1}{2}(\chi_{[0,1)} + \chi_{[3,4)}))$}(q2);
\end{tikzpicture}
\end{center}
Noting that $\frac{1}{2}(\chi_{[0,1)} + \chi_{[3,4)}) =  \frac{1}{2}\chi_{[0,2)} - \frac{1}{2}\chi_{[1,3)} + \frac{1}{2}\chi_{[2,4)}$ and the set $\{\frac{1}{2}\chi_{[0,2)}, \frac{1}{2}\chi_{[1,3)},\frac{1}{2}\chi_{[2,4)}\}$ is linearly independent we obtain the independent functional decomposition
\[
 \Psi \ = \
 \frac{1}{2}\chi_{[0,2)} \begin{pmatrix}\frac12&0\\0&\frac12\end{pmatrix} +
 \frac{1}{2}\chi_{[1,3)} \begin{pmatrix}0&\frac12\\0&-\frac12\end{pmatrix} +
 \frac{1}{2}\chi_{[2,4)} \begin{pmatrix}0&0\\\frac12&\frac12\end{pmatrix}\,.
\]
According to \Cref{reductiontime}, $P = \begin{psmallmatrix}\frac12&\frac12\\\frac12&\frac12\end{psmallmatrix}$.
Further we compute $\theta = \frac12$ and $d - \theta = \frac52$ and
\begin{alignat*}{3}
& M(a) &&= \frac25 \Big[\begin{pmatrix}
\frac12&\frac12\\\frac12&\frac12
\end{pmatrix} - \frac12\begin{pmatrix}
\frac12&0\\0&\frac12
\end{pmatrix}\Big] && = \begin{pmatrix}
\frac1{10}&\frac15\\\frac15&\frac1{10}
\end{pmatrix} \\
& M(b) && = \frac25\Big[\begin{pmatrix}
\frac12&\frac12\\\frac12&\frac12
\end{pmatrix} - \frac12\begin{pmatrix}
0&\frac12\\0&-\frac12
\end{pmatrix}\Big] && = \begin{pmatrix}
\frac15&\frac1{10}\\\frac15&\frac3{10}
\end{pmatrix} \\
& M(c) && = \frac25\Big[\begin{pmatrix}
\frac12&\frac12\\\frac12&\frac12
\end{pmatrix} - \frac12\begin{pmatrix}
0&0\\\frac12&\frac12
\end{pmatrix}\Big] && = \begin{pmatrix}
\frac15&\frac15\\\frac1{10}&\frac1{10}
\end{pmatrix}.
\end{alignat*}
It follows that any initial distributions $\pi_1$ and $\pi_2$ are equivalent in $(Q, \Sigma, \Psi)$ if and only if they are equivalent in the following HMM:
\begin{center}
	\begin{tikzpicture}[scale=2.5,LMC style]
	\node[state] (q1) at (0,0) {$q_1$};
	\node[state] (q2) at (+1.5,0) {$q_2$};
	\path[->] (q1) edge [loop,out=160,in=200,looseness=10] node[left] {$\frac12 (\frac15 a + \frac25 b + \frac25 c)$} (q1);
	\path[->] (q1) edge [bend left=20] node[above] {$\frac12 (\frac25 a + \frac15 b + \frac25 c)$} (q2);
	\path[->] (q2) edge [bend left=20] node[below] {$\frac12 (\frac25 a + \frac25 b + \frac15 c)$} (q1);
	\path[->] (q2) edge [loop,out=20,in=340,looseness=10] node[right] {$\frac12 (\frac15 a + \frac35 b + \frac15 c)$} (q2);
	\end{tikzpicture}
\end{center}
For any initial distributions $\pi_1, \pi_2 \in \QQ^2$ this can be checked with \Cref{prop-finite-HMM}.
(In this example $\pi_1 \equiv \pi_2$ holds only if $\pi_1 = \pi_2$.)
\qed
\end{example}

\begin{example} \label{ex-timing}
We also discuss an example, inspired from~\cite{braun2015robust}, where HMM non-equivalence means susceptibility to timing attacks, and HMM equivalence means immunity to such attacks.
Consider a system that emits two kinds of observations, both visible to an attacker: a function to be executed (we arbitrarily assume a choice between two functions $a$ and~$b$, and impute a probability distribution between them) and the time it takes to execute that function.
An attacker therefore sees a sequence $\ell_1 t_1 \ell_2 t_2 \ldots$, where $\ell_i \in \{a,b\}$ and $t_i \in [0,\infty)$.
In~\cite{braun2015robust} the times $t_1, t_2, \ldots$ are all identical and depend only on the secret key held by the system, but we assume in the following that the $t_i$ are drawn from a probability distribution that depends on the function ($a$ or~$b$) and the key.
We assume that with key~$i$ the execution times have uniform distributions $U[m_i^a-\frac12, m_i^a+\frac12)$ and $U[m_i^b-\frac12, m_i^b+\frac12)$.
The situation can then be modelled with the HMM below.%
\footnote{In this case the observation set $\Sigma = [0,\infty) \cup \{a,b\}$ is a disjoint union of topological spaces and there is a natural measure space induced from the Lebesgue measure space on $[0,\infty)$ and a discrete measure on $\{a,b\}$.}
\begin{center}
\begin{tikzpicture}[scale=2.3,LMC style,yscale=0.6]
\node[state] (q1) at (2,0) {$s_i$};
\node[state] (q1a) at (0,0) {$t_i^a$};
\node[state] (q1b) at (4,0) {$t_i^b$};
\path[->] (q1a) edge [bend right=20] node[below] {$U[m_i^a-\frac12, m_i^a+\frac12)$} (q1);
\path[->] (q1) edge [bend right=20] node[above] {$\frac13 a$} (q1a);
\path[->] (q1b) edge [bend left=20] node[below] {$U[m_i^b-\frac12, m_i^b+\frac12)$} (q1);
\path[->] (q1) edge [bend left=20] node[above] {$\frac23 b$} (q1b);
\end{tikzpicture}
\end{center}
A \emph{timing leak} occurs if the attacker can glean the key from the execution times.
For example, the attacker can distinguish between keys $k_1$ and~$k_2$ if and only if states $s_1$ and $s_2$ are not equivalent.
One can check, using the algorithm we have developed in this section, that $s_1$ and $s_2$ are equivalent if and only if $m_1^a = m_2^a$ and $m_1^b = m_2^b$.
Moreover, it follows from \cref{sec-linearly-decomposable} that if instead of $U[m_1^a-\frac12, m_1^a+\frac12)$ and $U[m_2^a-\frac12, m_2^a+\frac12)$ we had two distributions with density functions from $[\![\pl]\!]$ with the same mean and the same variance, states $s_1, s_2$ would still be non-equivalent whenever the two distributions are not identical.

One may try to guard against this timing leak by ``\emph{padding}'' the execution time, so that the sum of the execution time and an added time is constant (and independent of the key).
After the execution of the function, an idling loop would be executed until the worst-case (among all keys) execution time of the functions has been reached or exceeded. Let us call this worst-case execution time $w \in (0,\infty)$.
This idling loop would take time $u>0$ in each iteration, so the total idling time is always an integer multiple of~$u$.
It is argued in~\cite{braun2015robust} that this guard is in general ineffective in that the attacker can still glean the execution time modulo~$u$.
Therefore, it is suggested in~\cite{braun2015robust} to add, in addition, a time that is uniformly distributed on $[0,u)$.

This remedy also works in our case with random execution times.
Indeed, one can show that for any independent random variables~$X,Y$, where $Y$ is distributed with~$U[0,u]$, we have that $(X + Y) \bmod u$ is distributed with~$U[0,u)$. Therefore, by adding an independent $U[0,u)$ random time to the padding described above, the times observable by the attacker now have a $U[w+u,w+2u)$ distribution, independent of the key.
\begin{center}
\begin{tikzpicture}[scale=2.3,LMC style,yscale=0.6]
\node[state] (q1) at (2,0) {$s_i$};
\node[state] (q1a) at (0,0) {$t_i^a$};
\node[state] (q1b) at (4,0) {$t_i^b$};
\path[->] (q1a) edge [bend right=20] node[below] { $U[w+u,w+2u)$} (q1);
\path[->] (q1) edge [bend right=20] node[above] {$\frac13 a$} (q1a);
\path[->] (q1b) edge [bend left=20] node[below] {$U[w+u,w+2u)$} (q1);
\path[->] (q1) edge [bend left=20] node[above] {$\frac23 b$} (q1b);
\end{tikzpicture}
\end{center}
All states~$s_i$ are now equivalent, so the key does not leak.
\qed
\end{example}

\section{Conclusions}

We have shown that equivalence of continuous-observation HMMs is decidable in polynomial time, by reduction to the finite-observation case.
The crucial insight is that, rather than integrating the density functions, one needs to consider them as elements of a vector space and computationally establish linear (in)dependence of functions.
Therefore, our polynomial-time reduction performs symbolic computations on continuous density functions.
As a suitable framework for these computations we have introduced the notion of linearly decomposable profile languages, and we have established $\pl$ as such a profile language.

In future work, it would be desirable to extend~$\pl$ and/or develop other linear decomposable profile languages, including over sets $\Sigma$ of observations that are not real numbers.
The authors believe that the developed computational framework may be the foundation for further algorithms on continuous-observation HMMs.
For example, one may want to compute the total-variation distance of two continuous-observation HMMs.
Can Markov chains with continuous emissions be model-checked efficiently?

\bibliography{continuousalphabethmcs}

\appendix

\section{Missing Proofs}\label{alternants}

\subsection{Proof of Proposition~\ref{equivifperpspan} and Lemma~\ref{langequiv}}

\equivifperpspan*

Unlike in finite probability spaces this fact requires additional assumptions about the space. Therefore we first prove a Lemma that encapsulates these assumptions.

\begin{lemma}\label{pwcontinuousspanlemma}
Let $(Q, \Sigma, \Psi)$ be a HMM and let $\pi_1, \pi_2$ be initial distributions. As discussed in the preliminaries, $(\Sigma, \GG, \lambda)$ is a measure space such that any open set $E \in \GG$ has non-null measure. Fix $n \in \NN$. Then, $(\pi_1 - \pi_2) \Psi(w) \1^T = 0$ for all $w \in \Sigma^n$ if and only if $(\pi_1 - \pi_2)\int_{E} \Psi\,  d\lambda^n \1^T = 0$ for all $E \in \GG^n$.
\end{lemma}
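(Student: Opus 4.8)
The plan is to treat the equivalence one direction at a time, after reducing everything to a single scalar function. Write $\delta = \pi_1 - \pi_2 \in \RR^{|Q|}$ and define $g : \Sigma^n \to \RR$ by $g(w) = \delta\,\Psi(w)\,\1^T$. Since $\delta$ and $\1^T$ are constant and integration is element-wise, linearity gives $\delta\int_{E}\Psi\,d\lambda^n\,\1^T = \int_{E} g\,d\lambda^n$ for every $E \in \GG^n$ (these integrals being finite, as $\int_{\Sigma^n}\Psi\,d\lambda^n$ is stochastic). So the statement becomes: $g$ vanishes pointwise on $\Sigma^n$ if and only if $\int_E g\,d\lambda^n = 0$ for all $E \in \GG^n$. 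The forward implication is then immediate: if $g \equiv 0$ everywhere, every such integral is $0$. The entire content lies in the reverse direction, and that is where I would focus.

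For the reverse direction, assume $\int_E g\,d\lambda^n = 0$ for all $E \in \GG^n$ and aim to show $g(w) = 0$ for every $w \in \Sigma^n$. First I would record that $g$ is piecewise continuous with set of continuity $C^n$: it is a fixed linear combination of the entries of $\Psi$, so it is continuous on $C^n$ because $\Psi$ is; and for $w \in \Sigma^n \setminus C^n$, the approximating sequence $w_k \in C^n$ with $w_k \to w$ and $\Psi(w_k) \to \Psi(w)$ supplied by piecewise continuity of $\Psi$ also satisfies $g(w_k) = \delta\,\Psi(w_k)\,\1^T \to \delta\,\Psi(w)\,\1^T = g(w)$.

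The crux is to promote ``$g$ integrates to zero over every measurable set'' to ``$g$ vanishes at every point of $C^n$,'' and this is exactly where the standing assumption that open sets carry positive measure is used. I would argue by contradiction: if $g(w_0) \neq 0$ for some $w_0 \in C^n$, say $g(w_0) > 0$, then continuity of $g$ on the open set $C^n$ yields an open neighbourhood of $w_0$ on which $g > \tfrac{1}{2} g(w_0)$. Shrinking to a basic open box $A_1 \times \dots \times A_n$ of the product topology containing $w_0$, each $A_i$ is a nonempty open subset of $\Sigma$, hence lies in $\GG$ with $\lambda(A_i) > 0$. Thus $B = A_1 \times \dots \times A_n$ is a cylinder set in $\GG^n$ with $\lambda^n(B) = \prod_{i} \lambda(A_i) > 0$, and $\int_B g\,d\lambda^n \geq \tfrac{1}{2} g(w_0)\,\lambda^n(B) > 0$, contradicting the hypothesis. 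Hence $g \equiv 0$ on $C^n$. Finally I would extend to the boundary: for $w \in \Sigma^n \setminus C^n$, the approximating sequence $w_k \in C^n$ gives $g(w) = \lim_k g(w_k) = 0$, completing the argument.

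The main obstacle is precisely this a.e.-to-everywhere upgrade. Vanishing of all integrals only forces $g = 0$ almost everywhere, which over an arbitrary measure space would not control the pointwise values; the positivity of measure on open sets is what converts a pointwise nonzero value on the continuity region into a positive integral, and piecewise continuity is what lets the conclusion propagate to $\Sigma^n \setminus C^n$. I would emphasise that both hypotheses are genuinely needed and that the box-shrinking step keeps the argument inside the cylinder-generated $\sigma$-algebra $\GG^n$, so no measurability issues arise.
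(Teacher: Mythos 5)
Your argument is correct and follows essentially the same route as the paper's: the forward direction is immediate, and the converse combines positivity of $\lambda$ on open sets with continuity on $C^n$ and the approximating-sequence property at points of $\Sigma^n\setminus C^n$. The only (cosmetic) differences are that you first establish $g\equiv 0$ on $C^n$ and then pass to the boundary by limits, whereas the paper argues directly from an arbitrary point where $g\neq 0$, and your explicit shrinking to a product box $A_1\times\dots\times A_n$ is in fact slightly more careful than the paper's appeal to $\lambda(B(v_k,\epsilon_k))>0$, since the standing assumption only guarantees positive measure for open subsets of $\Sigma$ itself.
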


\begin{proof}
The forward implication is clear.
For the converse, suppose $v \in \Sigma^n$ is such that $(\pi_1 - \pi_2) \Psi(v) \1^T > 0$.
Since $\Psi$ is piecewise continuous when restricted to $\Sigma^n$, it has a set of continuity $C^n$ as described in the preliminaries and there is a sequence of words $v_k \in C^n$ such that $\lim_{k \rightarrow \infty} \Psi(v_k) = \Psi(v)$.
As $C^n$ is an open set, there is a sequence of open balls $B(v_k, \epsilon_k) \subseteq C^n$ with $\lim_{k \to \infty} \epsilon_k = 0$.
Hence there is $k \in \NN$ such that $(\pi_1 - \pi_2) \Psi(w) \1^T > 0$ for all $w \in B(v_k, \epsilon_k)$.
By the property of $(\Sigma, \GG, \lambda)$ stated in the proposition, we have $\lambda(B(v_k, \epsilon_k)) > 0$
 and therefore $(\pi_1 - \pi_2) \int_{B(v_k, \epsilon_k)} \Psi\, d\lambda^n \1^T > 0$. A symmetrical argument can be applied in the case $(\pi_1 - \pi_2) \Psi(v) \1^T < 0$.
\end{proof}

We may now prove \Cref{equivifperpspan}.

\begin{proof}[Proof of \Cref{equivifperpspan}]
We have:
\begin{align*}
\pi_1 \equiv \pi_2 & \iff \PP_{\pi_1}(E) = \PP_{\pi_2}(E) & & \forall E \in \GG^* \\
& \iff (\pi_1 - \pi_2) \int_{E} \Psi\, d\lambda^n \1^T = 0 & & \forall E \in \GG^n,\ n \in \NN \\
& \iff (\pi_1 - \pi_2) \Psi(w) \1^T = 0 & & \forall w \in \Sigma^n,\ n \in \NN\\
& \iff (\pi_1 - \pi_2) \perp \Span \{\Psi(w) \1^T \mid w \in \Sigma^*\}\,,
\end{align*}
where the third equivalence follows from \Cref{pwcontinuousspanlemma} and is a result of $\Psi$ being piecewise continuous.
\end{proof}

Now we prove \Cref{langequiv}.

\langequiv*

\begin{proof}
Let $w = x_1\cdots x_N \in \Sigma_1^*$.
Then $\Psi_1(x_n) = \sum_{i = 1}^{I_n} \lambda_{i,n} \Psi_2(y_{i,n})$ for $n \in [N]$ and
\begin{align*}
\Psi_1(w) & \ = \ \Big(\sum_{i_1 = 1}^{I_1} \lambda_{i_1,1} \Psi_2(y_{i_1,1}) \Big) \dots \Big(\sum_{i_N = 1}^{I_N} \lambda_{i_N,N} \Psi_2(y_{i_N,N}) \Big) \\
& \ = \ \sum_{i_1 = 1}^{I_1} \dots \sum_{i_N = 1}^{I_N} \lambda_{i_1, 1} \dots \lambda_{i_N,N} \Psi_2(y_{i_1, 1}) \dots \Psi_2(y_{i_N, N}) \ \in \ \Span \{ \Psi_2(w) \mid w \in \Sigma_2^*\}.
\end{align*}
Thus, $\Span \{\Psi_1(w) \mid w \in \Sigma_1^*\} \subseteq \Span \{\Psi_2(w) \mid w \in \Sigma_2^*\}$. Therefore, by \Cref{equivifperpspan},
\begin{align*}
\pi_1 \equiv_{C_2} \pi_2 & \iff \pi_1 - \pi_2 \perp \Span \{\Psi_{2}(w) \1^T \mid w \in \Sigma_2^*\} \\
& \implies \pi_1 - \pi_2 \perp \Span \{\Psi_{1}(w) \1^T \mid w \in \Sigma_1^*\} \\
& \iff \pi_1 \equiv_{C_1} \pi_2. \tag*{\qedhere}
\end{align*}
\end{proof}

\subsection{Proof of \Cref{commonfuncslineardecomp} and an illustrating example}\label{appendixl12l13}

The main argument for \Cref{commonfuncslineardecomp} comes from two Lemmas which we state and prove first.

\alternantexistence*

\begin{proof}
	Suppose that the $f_1, \dots, f_n$ are linearly dependent. Then, there exist $\lambda_1, \dots, \lambda_n \in \RR$ that are not all $0$ such that $\sum_{i = 1}^n \lambda_i f_i(x) = 0$ for all $x \in \Sigma$. The same dependence holds for the columns of any alternant matrix for the $f_i$. This proves the forward implication.
	
	Write $M_{f_1, \dots, f_n}(x_1, \dots, x_n)$ for the alternant matrix generated by the functions $f_1, \dots, f_n$ and input points $x_1, \dots, x_n$ and let $G_{f_1, \dots, f_n} : \Sigma^n \rightarrow \RR$ be given by $G_{f_1, \dots, f_n}(x_1, \dots, x_n) = \det M_{f_1, \dots, f_n}(x_1, \dots, x_n)$.

	For the converse implication, it suffices to show that $G_{f_1, \dots, f_n} = 0$ on $\Sigma^n$ implies that $\{f_1, \dots, f_n\}$ is linearly dependent. We proceed by induction on the number~$n$ of functions. Suppose $n = 1$ with single function $f$. If for all $x \in \Sigma$, $0 = G_f(x) = f(x)$ then clearly $f = 0$ and $\{0\}$ is a linearly dependent set in any vector space.
	
	Now suppose that for $n \geq 1$ and arbitrary functions $g_1, \dots, g_n : \Sigma \rightarrow \RR$, $G_{g_1, \dots , g_n} = 0$ implies that $g_1, \dots, g_n$ are linearly dependent. Let $f_1, \dots, f_{n+1} : \Sigma \rightarrow \RR$ and $x_1, \dots, x_{n + 1}$. A Laplace expansion of $G_{f_1, \dots, f_{n+1}}(x_1, \dots, x_{n+1})$ along the first row of $M_{f_1, \dots, f_{n+1}}(x_1, \dots, x_{n+1})$ gives
	\begin{equation*}
	\begin{split}
	G_{f_1, \dots, f_{n+1}}(x_1, \dots, x_{n+1}) & = f_1(x_1) G_{f_2, \dots, f_{n+1}}(x_2, \dots, x_{n+1}) \\
	& +  \cdots \\
	& + (-1)^{n} f_{n+1}(x_1) G_{f_1, \dots, f_n}(x_2, \dots, x_{n+1}).
	\end{split}
	\end{equation*}
	Suppose $G_{f_1, \dots, f_{n+1}}(x_1, \dots, x_{n+1}) = 0$ holds for all $x_1, \dots, x_n$.
	We distinguish between two cases.
	\begin{itemize}
		\item Either there exist $x_2, \dots, x_{n+1} \in \Sigma$ such that the cofactors \[G_{f_2, \dots, f_{n+1}}(x_2, \dots, x_{n+1}), G_{f_1, f_3, \dots, f_{n+1}}(x_2, \dots, x_{n+1}), \dots, G_{f_1, \dots, f_n}(x_2, \dots, x_{n+1})\] are not all~$0$.
		This establishes a linear dependence in $f_1, \dots, f_{n+1}$.
		\item Or all cofactors are $0$ for all $x_2, \dots, x_{n+1}$.
		Then, in particular, $G_{f_2, \dots, f_{n+1}}(x_2, \dots, x_{n+1}) = 0$ for all $x_2, \dots, x_{n+1}$.
		By the induction hypothesis it follows that the functions $f_2, \dots, f_{n+1}$ are linearly dependent.
		Hence, so are $f_1, \dots, f_{n+1}$.
	\end{itemize}
	In either case it follows that $f_1, \dots, f_{n+1}$ are linearly dependent.
\end{proof}

\linindepofcommonfuncs*

\begin{proof}
	Assume that there is a linear dependence\begin{equation*}
	\sum_{i = 1}^{I} r_i m_i(x) + \sum_{j = 1}^J s_j g_j(x) + \sum_{k = 1}^K t_k e_k(x) = 0 \quad \forall x \in \RR.
	\end{equation*}
	By reordering if necessary, we may assume that the exponential functions $e_1, \dots, e_K$ have strictly decreasing rates $\lambda_1 > \dots > \lambda_K$. The function $e_K$ tends to $0$ at the slowest rate out of all other functions in the linear dependence and so
	\begin{equation*}
	\lim_{x \rightarrow \infty}  \frac{1}{e_K(x)}\Big[\sum_{i = 1}^{I} r_i m_i(x) + \sum_{j = 1}^J s_j g_j(x) + \sum_{k = 1}^K t_k e_k(x)\Big] = t_K\,,
	\end{equation*}
	which implies that $t_K = 0$. Repeating this argument for decreasing $k \in [K]$ it follows that $t_1 = \dots = t_K = 0$ and therefore
	\begin{equation*}
	\sum_{i = 1}^{I} r_i m_i(x) + \sum_{j = 1}^J s_j g_j(x) = 0 \quad \forall x \in \RR.
	\end{equation*}
	Suppose the Gaussian functions $g_1, \dots, g_J$ have mean and standard deviation $\mu_1, \dots, \mu_J$ and $\sigma_1, \dots, \sigma_J$, respectively. By defining the ordering $g_i <_{\text{lex}} g_j$ if and only if $\sigma_i < \sigma_j \lor (\sigma_i = \sigma_j \land \mu_i < \mu_j)$ we may assume without loss of generality that $g_1 <_{\text{lex}} \dots <_{\text{lex}} g_J$. It follows that for $1 \leq j < J$ the ratio
	
	\begin{equation*}
	\begin{split}
	\frac{g_j(x)}{g_J(x)} & = \left. \frac{1}{\sigma_j\sqrt{2\pi}} \exp\Big[ - \frac{(x - \mu_j)^2}{2\sigma_j^2}\Big] \middle/ \frac{1}{\sigma_J\sqrt{2\pi}} \exp\Big[- \frac{(x - \mu_J)^2}{2\sigma_J^2}\Big] \right. \\
	& = \frac{\sigma_J}{\sigma_j} \exp\Big[\frac{1}{2}\Big( \Big(\frac{1}{\sigma_J^2} - \frac{1}{\sigma_j^2} \Big)x^2 - 2\Big( \frac{\mu_J}{\sigma_J^2} - \frac{\mu_j}{\sigma_j^2}\Big)x  + \Big( \frac{\mu_J^2}{\sigma_J^2} - \frac{\mu_j^2}{\sigma_j^2}\Big)\Big)\Big]\\
	& \rightarrow 0 \text{ as } x \rightarrow \infty\,,
	\end{split}
	\end{equation*}
	as $g_j <_{\text{lex}} g_J$ implies that the dominant polynomial coefficient in the exponent is always negative. Any Gaussian density function tends to $0$ slower than any interval-domain monomial at $+\infty$, so similarly to the exponential densities,
	\begin{equation*}
	0 = \lim_{x \rightarrow \infty} \frac{1}{g_J(x)}\Big[\sum_{i = 1}^{I} r_i m_i(x) + \sum_{j = 1}^J s_j g_j(x)\Big] = s_J\,.
	\end{equation*}
	By repeating this argument for decreasing $j \in [J]$, we obtain $s_1 = \dots = s_J = 0$. It remains to show the remaining interval-domain monomials are linearly independent. Since $H$ is finite, all interval-domain monomials on $[a,b)$ have a maximum exponent $R$. The intervals are disjoint so it suffices to consider a single interval $[a,b)$ and show that the set of monomials $\{x^k\chi_{[a,b)} \mid k \in \{0, \dots, R\}\}$ is linearly independent.
	Consider the alternant matrix for $1\chi_{[a,b)}, x\chi_{[a,b)}, \dots, x^R\chi_{[a,b)}$ and distinct input points $x_1, \dots, x_{R + 1} \in [a,b)$. This matrix is a Vandermonde matrix and by \cite[p.9]{milne33} has full rank. Therefore, by \Cref{alternantexistence} the set $\{1\chi_{[a,b)}, x\chi_{[a,b)}, \dots, x^R\chi_{[a,b)}\}$ is linearly independent.
\end{proof}

\commonfuncslineardecomp*

\begin{proof}
Let $\Gamma_0 \subseteq \pl$ be a finite set of profiles.
Any profile in~$\Gamma_0$ encodes a linear combination of Gaussians, exponentials and interval-domain monomials.
Collect in $G_0$ and~$E_0$ the profiles of Gaussians and exponentials, respectively, that appear in the description of at least one profile in~$\Gamma_0$.
By sorting the start and end points of the intervals (that appear in the interval-domain monomials) in~$\Gamma_0$, we compute a finite set~$H$ of disjoint intervals such that every interval appearing in~$\Gamma_0$ is a union of intervals in~$H$.
Further, collect in~$N$ the set of degrees of monomials in~$\Gamma_0$.
Then we compute a set of profiles $M_0$ such that $[\![M_0]\!] = \{x^n \chi_{[a,b)} \mid n \in N,\ [a,b) \in H\}$.
By \cref{linindepofcommonfuncs}, the set $[\![G_0 \cup E_0 \cup M_0]\!]$ is linearly independent.
We compute the (unique) coordinates of all functions in~$[\![\Gamma_0]\!]$ in terms of that basis.

With these coordinates at hand, we compute a subset $\mathcal{B} \subseteq \Gamma_0$ such that $[\![\mathcal{B}]\!]$ is a basis of $\Span [\![\Gamma_0]\!]$ as follows.
Starting with the $\mathcal{B} = \emptyset$, go through $\Gamma_0$ one by one; whenever a profile $\gamma \in \Gamma_0$ is such that $[\![\mathcal{B} \cup \{\gamma\}]\!]$ is linearly independent then add $\gamma$ to~$\mathcal{B}$.
The check for linear independence can be performed in terms of the computed coordinates of $[\![\Gamma_0]\!]$ in the basis $[\![G_0 \cup E_0 \cup M_0]\!]$.
For the final set $\mathcal{B}$ we have that $[\![\mathcal{B}]\!]$ is a basis of $\Span [\![\Gamma_0]\!]$.
The coefficients that express $[\![\Gamma_0]\!]$ as a linear combination of~$[\![\mathcal{B}]\!]$ can be computed similarly.
All computations referred to in this proof are polynomial-time.
\end{proof}

\begin{example}\label{lineardecompexample}
We illustrate the proof of \Cref{commonfuncslineardecomp} using the HMM discussed in \Cref{finiteredproblem,profileexample,indfuncdecompex}. Recall the encoding of $\Psi$ is given as the matrix

\[
\begin{pmatrix}
0 & (\frac12, \gamma_1)  & (\frac12,\gamma_2) & 0 \\
0 & (1,\gamma_3) & 0 & 0 \\
0 & (1,\gamma_3) & 0 & 0 \\
0 & (1,\gamma_4) & 0 & 0
\end{pmatrix}
\]
with $\gamma_1, \gamma_2, \gamma_3, \gamma_4 \in \pl$ and
$[\![\gamma_1]\!] = 2x\chi_{[0,1)}$ and
$[\![\gamma_2]\!] = 2(1-x)\chi_{[0,1)}$ and
$[\![\gamma_3]\!] = \frac12 \chi_{[0,2)}$ and
$[\![\gamma_4]\!] = \chi_{[0,1)}$.
Clearly $\Gamma_0 = \{\gamma_1, \gamma_2, \gamma_3, \gamma_4\} \subset \pl$. By ordering the start and end points in $[0,2), [1,2)$ we compute $H = \{[0,1), [1,2)\}$.
The set of degrees is $N = \{0,1\}$.
We then compute the set~$M_0$ of profiles such that $[\![M_0]\!] = \{\chi_{[0,1)}, x\chi_{[0,1)},\chi_{[1,2)}, x\chi_{[1,2)}\}$
and express $[\![\gamma_1]\!], \dots, [\![\gamma_4]\!]$ as vectors of coordinates with respect to the basis~$[\![M_0]\!]$:
\begin{alignat*}{3}
&[\![\gamma_1]\!] && = 2x\chi_{[0,1)}      && = (0, 2, 0, 0) \\
&[\![\gamma_2]\!] && = 2(1-x)\chi_{[0,1)}  && = (2, -2, 0, 0) \\
&[\![\gamma_3]\!] && = \frac12\chi_{[0,2)} && = (\frac12, 0, \frac12, 0) \\
&[\![\gamma_4]\!] && = \chi_{[0,1)}        && = (1,0,0,0)
\end{alignat*}
We then compute a basis for this set of vectors: $\{(0, 2, 0, 0), (2, -2, 0, 0), (\frac12,0,\frac12,0)\}$. This implies that with $\mathcal{B} = \{\gamma_1, \gamma_2, \gamma_3\}$, the set $[\![\mathcal{B}]\!]$ is a basis for $\Span [\![\Gamma_0]\!]$.
Since $(1, 0, 0, 0) = \frac12(0, 2, 0, 0) + \frac12(2, -2, 0, 0)$, we express $[\![\gamma_4]\!]$ in terms of~$[\![\mathcal{B}]\!]$ by $[\![\gamma_4]\!] = \frac12 [\![\gamma_1]\!] + \frac12 [\![\gamma_2]\!]$.
\qed
\end{example}

\end{document}